\theoremstyle{plain}
\newtheorem{theorem}{Theorem}[section]
\newtheorem{corollary}[theorem]{Corollary}
\newtheorem{proposition}[theorem]{Proposition}
\newtheorem{lemma}[theorem]{Lemma}
\theoremstyle{definition}
\newtheorem{definition}[theorem]{Definition}
\newtheorem{Note}[theorem]{Note}
\begin{document}

\title{Superposition rules for higher-order systems and their applications}

\author{\small J F Cari\~nena$^{1}$, J Grabowski$^{2}$ and J de Lucas$^{1,3}$\footnote{e-mail:jfc@unizar.es$^{1}$,j.grabowski@impan.pl$^{2}$,delucas@impan.pl$^{3}$}
}

\maketitle

\centerline{\small $^{1}$Departamento de F\'isica Te\'orica and IUMA, Universidad de Zaragoza,}
\centerline{\small c. Pedro Cerbuna 12, 50.009, Zaragoza, Spain.}
\centerline{\small $^{2}$Faculty of Mathematics and Natural Sciences, Cardinal Stefan Wyszy\'nski University,}
\centerline{\small W\'oycickiego 1/3, 01-938 Warszawa, Poland.}
\centerline{\small $^{3}$Institute of Mathematics, Polish Academy of Sciences,}
\centerline{\small ul. \'Sniadeckich 8, P.O. Box 21, 00-956, Warszawa, Poland.}
\begin{abstract}
{\it Superposition rules} form a class of functions that describe general solutions of systems of first-order
ordinary differential equations in terms of generic families of particular solutions and certain constants.
In this work we extend this notion and other related ones to systems of higher-order differential equations and analyse their properties. Several results concerning the existence of various types of
superposition rules for higher-order systems are proved and illustrated with examples extracted from the physics and
mathematics literature. In particular, two new superposition rules for second- and third-order Kummer--Schwarz
equations are derived.
\end{abstract}

{\bf PACS}: 02.30.Hq,  02.40.Yy\\
\indent{\bf AMS}: 34A26 (Primary), 22E60, 34A34 (Secondary)\\

\noindent{\bf Keywords}: Lie system, superposition rule, partial superposition rule, SODE Lie system,
Vessiot--Guldberg Lie algebra, second-order differential equation, Kummer--Schwarz equation.

\section{Introduction}
\noindent

The study of superposition rules can be traced back to the end of the 19th century, when Lie, Vessiot, and
Guldberg \cite{LS}-\cite{Gu93} characterized and analysed the properties of systems of first-order
differential equations admitting this property, the so-called {\it Lie systems} \cite{CarRamGra}-\cite{CGM07}. Although the linear superposition
rule for  homogeneous linear systems of first-order differential equations admits a natural analogue for
homogeneous linear systems of higher-order differential equations (HODEs), the generalisation of their nonlinear counterpart is not so evident and
it has hardly been investigated so far \cite{Ve95,CL10SecOrd2}.

Recently, the necessity of a theory of (linear and nonlinear) superposition rules for systems of HODEs became
even more evident, as this concept repeatedly came up in the study of certain systems of second-order
differential equations with multiple applications in physics and mathematics \cite{CL10SecOrd2}-\cite{SIGMA}.

In an attempt to fill in this gap of the mathematics literature, the present work aims to formalize the
superposition rule notion for systems of HODEs and to analyse its properties. Since superposition rules for
systems of second-order differential equations (SODEs) represent one of the most relevant  types of
superposition rules appearing in the literature, special attention is paid to this case.

A notion of superposition rule for systems of SODEs was introduced in \cite{CL10SecOrd2}. Nevertheless, that
work was more focused on the practical use of the concept than on studying its properties. That is why we
start here by motivating this definition in detail and analysing some of its properties.

The fundamental problem on the analysis of superposition rules for systems of HODEs is to find coordinate-free geometric
conditions ensuring their existence. This problem, solved by the {\it Lie--Scheffers theorem} for systems of
first-order differential equations, is here explicitly solved for systems of SODEs. Our new result provides
not only a new insight into the study of superposition rules for SODEs, but also shows the existence of new
and more powerful types of superposition rules for such equations. These new notions can be regarded as
generalisations of other concepts already defined for systems of first-order differential equations (see
\cite{CGM07}). In addition, most of our achievements can be directly generalized to all systems of HODEs and
they are also employed to review previous notions dedicated to the study of such systems, e.g. SODE Lie
systems.

Apart from their mathematical interest, our results are also relevant so as to study all physical systems and problems, like nonquadratic Hamiltonians or Berry phases (see \cite{AL08} and references therein), related to differential equations admitting a superposition rule, such as second-order Riccati equations \cite{CL10SecOrd2} or Milne--Pinney equations \cite{CL08MP}.

To highlight the interest of our methods, they are illustrated by the analysis of examples extracted from the physics and mathematics
literature. Special attention is paid to second- and third-order Kummer--Schwarz equations, whose mathematical interest is
due, for instance, to their appearance in {\it Kummer's problem},  the study of Schwarzian derivatives, and other related topics \cite{Be07,Ta89}. Furthermore, Kummer--Schwarz equations occur in the analysis of non-stationary two body problems \cite{Be80,Be89} and, via their relation to Riccati and Milne--Pinney equations \cite{Co94,GGG11}, they can be employed to study several problems appearing in  cosmology, quantum mechanics, and other branches of physics \cite{AL08,Be80,Co94,GGG11}. We here derive
superposition rules for the analysis of such equations that provides us with several advantages with respect to previous methods of studying these, and other related, equations \cite{Co94,GGG11,Lie77}. As a byproduct, we find a new property of
Kummer--Schwarz equations: their dynamics is determined by a curve in a Lie algebra of vector fields isomorphic
to $\mathfrak{sl}(2,\mathbb{R})$.

The content of the paper is structured as follows. In Section 2 we describe some notions and results of the
theory of Lie systems to be used throughout the paper. Section 3 concerns the motivation and analysis of the
definition of a superposition rule for SODEs as well as several particular types of it  found in the literature.
In Section 4 we provide a characterization of systems of SODEs admitting certain types of superposition rules
and we describe a new kind of superposition rules for SODEs. In addition, several properties of superposition
rules for SODEs are analysed. The relation of our new results and the so--called SODE Lie systems is
studied in Section 5. The results of the previous sections lead to the definition and analysis, in Section 6, of a
general notion of a superposition rule for systems of first- and higher-order differential equations.
Subsequently, we illustrate in Sections 7 and 8  some of the theoretical results derived throughout our work
by the investigation of several remarkable HODEs. Finally, Section 9 summarizes
our achievements and details some work to be accomplished in the future.

\section{Fundamentals on Lie systems}\label{FLS}
\setcounter{equation}{0}
We hereafter assume all geometrical objects and mappings, like vector fields or superposition rules, to be real, smooth, and globally
defined. In this way, we highlight the key points of our presentation by omitting the analysis of certain
minor technical problems. For additional information, we refer to \cite{CGM07,CL10SecOrd2}.

\begin{definition} A {\it superposition rule} for a system of first-order ordinary differential equations
\begin{equation}\label{LieSystem}
\frac{dx^i}{dt}=X^i(t,x),\qquad i=1,\ldots,n,
\end{equation}
is a map $\Phi:\mathbb{R}^{mn} \times \mathbb{R}^n \rightarrow\mathbb{R}^n$ of the form
\begin{equation}\label{super}
x=\Phi(x_{(1)},\ldots,x_{(m)};k_1,\ldots,k_n),
\end{equation}
allowing us to write the general solution of system (\ref{LieSystem}) as
\begin{equation}\label{superposition}
x(t)=\Phi(x_{(1)}(t),\ldots,x_{(m)}(t);k_1,\ldots,k_n),
\end{equation}
with $x_{(1)}(t),\ldots,x_{(m)}(t)$ being a `generic' family of particular solutions and $k_1,\ldots,k_n$ being the constants related to the initial conditions of each particular solution.
\end{definition}

\begin{Note} We shall not define rigorously what `generic' means in the above definition, as it is not essential to our
purposes and depends on the particular case.  It shall be sufficient to bear in mind that, in the case of
linear superposition rules for  homogeneous linear systems of first-order differential equations, `generic'
means that the elements of the chosen finite family of particular solutions must be linearly independent.
\end{Note}

The uppermost achievement of the theory of Lie systems was obtained by Lie \cite{LS}, who succeeded in
characterizing systems of first-order differential equations that admit a superposition rule.

\begin{theorem}{\bf (The Lie--Scheffers theorem)} A system (\ref{LieSystem}) admits a superposition rule (\ref{super})
if and only if its right-hand side can be written as
\begin{equation}\label{LieDecom}
\frac{dx^i}{dt}=Z_1(t)\xi^i_1(x)+\ldots+Z_r(t)\xi^i_r(x),\qquad i=1,\ldots,n,
\end{equation}
so that the vector fields
\begin{equation}\label{VG}
X_\alpha(x)=\sum_{i=1}^n\xi_\alpha^i(x)\frac{\partial}{\partial x^i},\qquad \alpha=1,\ldots,r,
\end{equation}
with $r\leq m\cdot n$, span an $r$-dimensional real Lie algebra.
\end{theorem}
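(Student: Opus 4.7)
The plan is to prove both directions by means of the \emph{diagonal prolongation} trick. Given the time-dependent vector field $X(t,x)=\sum_i X^i(t,x)\,\partial/\partial x^i$ on $\mathbb{R}^n$, I would lift it to the time-dependent vector field $\widetilde X(t)=X^{(0)}(t)+X^{(1)}(t)+\cdots+X^{(m)}(t)$ on $(\mathbb{R}^n)^{m+1}$, where $X^{(j)}(t)$ is the copy of $X(t)$ acting on the $j$-th factor. Integral curves of $\widetilde X(t)$ are precisely $(m+1)$-tuples of solutions of (\ref{LieSystem}), and the formula $(x_{(1)},\ldots,x_{(m)};k)\mapsto \Phi(x_{(1)},\ldots,x_{(m)};k)$ yields an extra coordinate $x$ on one of the factors.

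For the sufficiency direction, assuming $X_1,\ldots,X_r$ span a finite-dimensional real Lie algebra $V$, I would integrate $V$ to a (local) Lie group action of some $G$ on $\mathbb{R}^n$. The diagonal action of $G$ on $(\mathbb{R}^n)^{m+1}$ has as infinitesimal generators the diagonal prolongations $\widetilde X_\alpha$, which span an isomorphic copy of $V$. Whenever $mn\ge r$, the generic orbits of the diagonal action have codimension at least $n$, so by Frobenius one can locally find $n$ functionally independent common first integrals $I_1,\ldots,I_n$ of the $\widetilde X_\alpha$'s with $\partial(I_1,\ldots,I_n)/\partial(x^1,\ldots,x^n)$ of maximal rank. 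Since $\widetilde X(t)$ is a time-dependent linear combination of the $\widetilde X_\alpha$'s, the $I_j$'s are also first integrals of $\widetilde X(t)$; solving $I_j(x_{(1)},\ldots,x_{(m)},x)=k_j$ for $x$ yields the desired superposition rule $\Phi$.

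For the necessity direction, starting from a superposition rule $\Phi$, I would locally invert the relation $x=\Phi(x_{(1)},\ldots,x_{(m)};k)$ (using $\partial\Phi/\partial k$ of maximal rank on a generic open set) to obtain functions $k_j=I_j(x_{(1)},\ldots,x_{(m)},x)$ on $(\mathbb{R}^n)^{m+1}$. Because (\ref{superposition}) transforms any $(m+1)$-tuple of solutions into the \emph{constants} $k_j$, the $I_j$'s are first integrals of $\widetilde X(t)$ for every $t\in\mathbb{R}$, i.e.\ $\widetilde X(t)$ is tangent to the codimension-$n$ foliation $\mathcal F$ defined by their level sets. Let $\mathcal V$ be the vector space of vector fields $Y$ on $\mathbb{R}^n$ whose diagonal prolongation $\widetilde Y$ is tangent to $\mathcal F$. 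The key finite-dimensionality claim is that $\dim \mathcal V\le mn$: a generic leaf of $\mathcal F$ has dimension $mn$, and the diagonal constraint together with the transversality of $\mathcal F$ to each factor forces $\widetilde Y$ to be determined by its value at a single generic point. Verifying that $\mathcal V$ is moreover closed under the Lie bracket (which follows from the integrability of the distribution tangent to $\mathcal F$) and choosing a basis $X_1,\ldots,X_r$ of $\mathcal V$ with $r\le mn$, I would express $X(t)=\sum_\alpha Z_\alpha(t)\,X_\alpha$ by writing $\widetilde X(t)$ in that basis and restricting to any factor.

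The main obstacle is precisely the finite-dimensionality argument in the necessity direction: one must pin down the genericity hypotheses on the family $x_{(1)},\ldots,x_{(m)}$ so that the transversality of $\mathcal F$ to the diagonal makes sense globally enough, ensure that the diagonal prolongation map $Y\mapsto \widetilde Y$ into tangent vectors of $\mathcal F$ is injective on $\mathcal V$, and then verify Lie-bracket closedness. The bound $r\le m\cdot n$ then appears naturally from the dimension count $mn=\dim\mathcal F_\mathrm{leaf}$.
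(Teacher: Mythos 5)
The paper itself does not prove this theorem: it is quoted as Lie's classical result, with the argument deferred to \cite{LS} and, in its modern geometric form, to \cite{CGM07}, whose machinery (diagonal prolongations, Proposition \ref{FolSup}, Lemma \ref{FundLem}) is exactly what Section \ref{FLS} recalls. Your proposal reconstructs that modern proof: identifying a superposition rule with a codimension-$n$ foliation of $(\mathbb{R}^n)^{m+1}$ tangent to the prolonged dynamics and transversal to the fibres of ${\rm pr}$, bounding the space $\mathcal V$ of vector fields with prolongation tangent to the foliation by $mn$ via evaluation at a generic point of $(\mathbb{R}^n)^m$, closing $\mathcal V$ under brackets using $[\widetilde Y,\widetilde Z]=\widetilde{[Y,Z]}$, and reversing the construction by the method of characteristics. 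In substance this is the intended argument and the outline of both directions is sound.

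One step would fail as literally written. In the sufficiency direction you assert that $mn\ge r$ already guarantees $n$ common first integrals with $\partial(I_1,\ldots,I_n)/\partial(x^1_{(0)},\ldots,x^n_{(0)})$ of maximal rank. The dimension count gives codimension at least $n$, but not transversality to the $x_{(0)}$-fibre: if the prolongations of $X_1,\ldots,X_r$ to $(\mathbb{R}^n)^m$ (the last $m$ factors) are linearly \emph{dependent} at generic points, the prolonged distribution on $(\mathbb{R}^n)^{m+1}$ contains nonzero vectors vertical over $(\mathbb{R}^n)^m$, and then every common first integral is degenerate in the $x_{(0)}$ variables. For instance, take $n=2$ and $V$ spanned by $\partial_x,\partial_y,x\partial_y,x^2\partial_y$, so $r=4$; with $m=2$ one has $mn=r$, yet the three fields $x^k\partial_y$ prolong into the two-dimensional span of $\partial_{y_{(1)}},\partial_{y_{(2)}}$, a Vandermonde dependence places $\partial_{y_{(0)}}$ inside the prolonged distribution, and no superposition rule with two particular solutions exists; one needs $m=3$. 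The correct hypothesis --- the one appearing in step 2 of the algorithm of Section \ref{FLS} --- is linear independence of the diagonal prolongations on $(\mathbb{R}^n)^m$ at a generic point, which can always be achieved for some $m\le r$ and which \emph{implies}, rather than follows from, $r\le mn$. With that adjustment, and granting the genericity technicalities you flag yourself in the necessity direction, the argument goes through.
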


The following definition and lemma, whose proof is a straightforward consequence of the Jacobi identity,
notably simplify several statements and proofs of various results concerning the theory of Lie systems.
\begin{definition}\label{LieSpan} Given a (finite or infinite) family $\mathcal{A}$ of vector fields on $\mathbb{R}^n$,
we denote by ${\rm Lie}(\mathcal{A})$ the smallest Lie algebra $V$ of vector fields on $\mathbb{R}^n$ containing $\mathcal{A}$.
\end{definition}

\begin{lemma}\label{LieFam} Given a family of vector fields $\mathcal{A}$, the linear space ${\rm Lie}(\mathcal{A})$
is spanned by the vector fields of
$$\mathcal{A},\,\,[\mathcal{A},\mathcal{A}],\,\,[\mathcal{A},[\mathcal{A},\mathcal{A}]],\,\,
[\mathcal{A},[\mathcal{A},[\mathcal{A},\mathcal{A}]]],\ldots$$
where $[\mathcal{A},\mathcal{B}]$, with $\mathcal{B}=\mathcal{A},[\mathcal{A},\mathcal{A}],\dots$, denotes the set of Lie brackets between the elements of the families
$\mathcal{A}$ and $\mathcal{B}$ of vector fields.
\end{lemma}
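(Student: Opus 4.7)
I would prove the two inclusions separately. Write $W$ for the linear span of the iterated families $B_0 = \mathcal{A}$, $B_1 = [\mathcal{A},\mathcal{A}]$, $B_2 = [\mathcal{A},[\mathcal{A},\mathcal{A}]]$, and in general $B_{k+1} = [\mathcal{A},B_k]$. The inclusion $W \subseteq \mathrm{Lie}(\mathcal{A})$ is immediate: $\mathrm{Lie}(\mathcal{A})$ contains $\mathcal{A}$ and is closed under both brackets and linear combinations, so an obvious induction on $k$ gives $B_k \subseteq \mathrm{Lie}(\mathcal{A})$, and therefore $W \subseteq \mathrm{Lie}(\mathcal{A})$.

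The content of the lemma is the reverse inclusion. Since $W$ already contains $\mathcal{A}$ and is, by construction, a linear subspace of the space of vector fields on $\mathbb{R}^n$, it suffices to show that $W$ is itself closed under the Lie bracket; then $W$ is a Lie algebra containing $\mathcal{A}$, and by the minimality built into Definition \ref{LieSpan}, one has $\mathrm{Lie}(\mathcal{A}) \subseteq W$. By bilinearity of the bracket, closure will follow from the set-theoretic inclusion $[B_i, B_j] \subseteq W$ for all $i, j \geq 0$.

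I would establish this last statement by induction on $i$, with $j$ arbitrary. For $i = 0$ the definition gives $[B_0, B_j] = [\mathcal{A}, B_j] = B_{j+1} \subseteq W$. For the inductive step, a generator of $B_i$ has the form $[A, Y]$ with $A \in \mathcal{A}$ and $Y \in B_{i-1}$, so for any $Z \in B_j$ the Jacobi identity rewrites the bracket as
\begin{equation*}
[[A,Y],Z] = [A,[Y,Z]] - [Y,[A,Z]].
\end{equation*}
By the inductive hypothesis $[Y,Z] \in [B_{i-1}, B_j] \subseteq W$, and since applying $[A,\cdot\,]$ sends $B_k$ into $B_{k+1}$ for each $k$, the first term lies in $W$. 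For the second term, $[A,Z] \in [B_0, B_j] = B_{j+1}$, and then $[Y,[A,Z]] \in [B_{i-1}, B_{j+1}] \subseteq W$ again by the inductive hypothesis. This closes the induction.

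The only mildly delicate point in this plan is that when I apply the inductive hypothesis to conclude $[A,[Y,Z]] \in W$, the element $[Y,Z]$ is merely a linear combination of vectors lying in various $B_k$'s rather than in a single one; but this causes no trouble because the bracket is linear in each argument and each summand $[A,V_k]$ with $V_k \in B_k$ lies in $B_{k+1} \subseteq W$. So the only real obstacle is the bookkeeping on left-nested brackets, which is exactly the \emph{straightforward consequence of the Jacobi identity} alluded to in the statement.
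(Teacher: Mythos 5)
Your proof is correct, and it follows exactly the route the paper indicates: the paper gives no written proof, stating only that the lemma is ``a straightforward consequence of the Jacobi identity,'' and your induction on the nesting depth via $[[A,Y],Z]=[A,[Y,Z]]-[Y,[A,Z]]$ is precisely the standard fleshing-out of that remark. The point you flag about $[Y,Z]$ being only a linear combination of elements of the various $B_k$ is handled correctly by bilinearity, so there is no gap.
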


Recall that if $\tau:{\rm T}\mathbb{R}^n\rightarrow\mathbb{R}^n$ denotes the tangent bundle projection and
$\pi_2$ stands for the projection $\pi_2:(t,x)\in\mathbb{R}\times\mathbb{R}^n\mapsto x\in\mathbb{R}^n$, a {\it
time-dependent vector field} $X$ on $\mathbb{R}^n$ is a map $X:(t,x)\in\mathbb{R}\times\mathbb{R}^n\mapsto
X(t,x)\in \mathbb{\rm T}\mathbb{R}^n$ such that $\tau\circ X=\pi_2$. Observe that every time-dependent vector
field $X$ on $\mathbb{R}^n$ can be regarded as a family $\{X_t\}_{t\in\mathbb{R}}$ of vector fields on
$\mathbb{R}^n$, where $X_t:x\in\mathbb{R}^n\mapsto X_t(x)=X(t,x)\in{\rm T}_x\mathbb{R}^n$.

Similarly to standard vector fields, time-dependent vector fields also admit integral curves \cite{Car96,FM}.
We hereafter call an {\it integral curve} of $X$ passing through $(t_0,x_0)\in \mathbb{R} \times\mathbb{R}^n$ any
integral curve $\gamma^{x_0}_{t_0}:s\in\mathbb{R}\mapsto (t(s),\bar \gamma(s))\in \mathbb{R}\times
\mathbb{R}^n$ of the one-dimensional distribution on $\mathbb{R}\times \mathbb{R}^n$ spanned by the {\it
suspension} of $X$, i.e. the vector field $\partial/\partial t+X(t,x)$  \cite{FM}, satisfying  that $(t_0,x_0)\in {\rm
Im}\,\gamma_{t_0}^{x_0}$.

From a modern geometric point of view, every system of first-order differential equations of the form
(\ref{LieSystem}) is described by the unique time-dependent vector field on $\mathbb{R}^n$, namely,
$X(t,x)=\sum_{i=1}^nX^i(t,x)\partial/\partial x^i$, whose integral curves are (up to an appropriate
reparametrisation) of the form $(t,x(t))$, with $x(t)$ being a solution of system (\ref{LieSystem}). For
simplicity, we  use the symbol $X$ to refer to both, a time-dependent vector field and the system of
differential equations describing its integral curves.

In such geometric terms, the Lie--Scheffers theorem states that a system $X$ admits a superposition rule if
and only if there exists a finite-dimensional Lie algebra of vector fields $V$, the so-called {\it
Vessiot--Guldberg Lie algebra}, such that $\{X_t\}_{t\in\mathbb{R}}\subset V$. In consequence, $X$ is a Lie
system if and only if the Lie algebra ${\rm Lie}(\{X_t\}_{t\in\mathbb{R}})$ is finite-dimensional.

The geometrical interpretation of superposition rules as well as one of the techniques for their determination
is based on the notion of {\it diagonal prolongation} \cite{CGM07}.

\begin{definition} Given a time-dependent vector field $X(t,x)=\sum_{i=1}^nX^i(t,x)\partial/\partial x^i$ on $\mathbb{R}^n$,
the time-dependent vector field $\widetilde X$ on $\mathbb{R}^{n(m+1)}$ of the form
$$
\widetilde X=\sum_{a=0}^m\sum_{i=1}^nX^i(t,x_{(a)})\frac{\partial}{\partial x^i_{(a)}},
$$
is called the {\it diagonal prolongation} to $\mathbb{R}^{n(m+1)}$ of $X$.
\end{definition}

A method for determining superposition rules is briefly described as follows (see \cite{CGM07,CL10SecOrd2} for
details and examples).
\begin{enumerate}
\item Take a basis $X_1,\ldots,X_r$ of a finite-dimensional Lie algebra (\ref{VG}) associated with the Lie system under study.
\item Choose the smallest positive integer $m$, so that the diagonal prolongations of the elements of the previous basis to $(\mathbb{R}^{n})^m$
are linearly independent at a generic point.
\item Take global coordinates $x^1,\ldots,x^n$ on $\mathbb{R}^n$. By defining this coordinate
system on each copy of $\mathbb{R}^n$ within $(\mathbb{R}^n)^{m+1}$, we get a coordinate system
$\{x^i_{(a)}\mid i=1,\ldots,n,\,\,a=0,\ldots,m\}$ on $(\mathbb{R}^{n})^{m+1}$.
Obtain $n$ functionally independent first-integrals $F_1,\ldots, F_n$ common to all diagonal prolongations
$\widetilde X_1,\ldots,\widetilde X_r$ of $X_1,\ldots,X_r$ to $(\mathbb{R}^{n})^{m+1}$ such that $\partial(F_1,\ldots,F_n)/\partial(x_{(0)}^1,\ldots,x_{(0)}^n)\neq 0$. This can be performed,
for instance, by means of the well-known {\it method of characteristics}.
\item Assume the above first-integrals to take certain real constant values, i.e. $F_i=k_i$ for $i=1,\ldots,n$.
By means of these equations, calculate the expressions of the variables $x_{(0)}^1\ldots,x_{(0)}^n$ in
terms of $x^1_{(a)},\ldots,x_{(a)}^n$, with $a=1,\ldots,m,$  and $k_1,\ldots,k_n$.
\item The obtained expressions give rise to a superposition rule in terms of any generic family of $m$ particular solutions
and the constants $k_1,\ldots, k_n$.
\end{enumerate}

Given two vector fields $X$ and $Y$, we have that $\widetilde{[X,Y]}=[\widetilde X,\widetilde Y]$, i.e. the
Lie bracket of two diagonal prolongations is a diagonal prolongation. Another, much less evident, property of
diagonal prolongations is described in the following lemma, whose proof can be found in \cite[Lemma 1]{CGM07}.

\begin{lemma}\label{FundLem}
Consider a family of vector fields $X_1,\ldots, X_r$ on $\mathbb{R}^n$ whose diagonal prolongations to
$\mathbb{R}^{nm}$ are linearly independent at a generic point. Then, given their diagonal prolongations
$\widetilde X_1,\ldots,\widetilde X_r$ to $\mathbb{R}^{n(m+1)}$, a vector field
$X=\sum_{\alpha=1}^rb_\alpha\widetilde X_\alpha$, with $b_\alpha\in C^{\infty}(\mathbb{R}^{n(m+1)})$, is again
a diagonal prolongation if and only if the functions $b_\alpha$ are constant.
\end{lemma}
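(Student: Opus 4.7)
The plan is to translate the statement ``$X=\sum_\alpha b_\alpha\widetilde X_\alpha$ is a diagonal prolongation'' into an explicit coordinate identity and then use the hypothesis on the $\widetilde X_\alpha$'s to force each $b_\alpha$ to be constant. Writing $X_\alpha=\sum_i\xi_\alpha^i(x)\partial/\partial x^i$, the equality $X=\widetilde Y$ for some $Y=\sum_iY^i(x)\partial/\partial x^i$ on $\mathbb{R}^n$ amounts to
\begin{equation*}
\sum_{\alpha=1}^r b_\alpha(x_{(0)},\ldots,x_{(m)})\,\xi_\alpha^i(x_{(a)})=Y^i(x_{(a)})
\end{equation*}
for every $a=0,\ldots,m$ and every $i=1,\ldots,n$. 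In particular, for each fixed $a$, the left-hand side is a function of $x_{(a)}$ alone, and this is the key consequence to be exploited.

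Differentiating the above identity with respect to $x^j_{(b)}$ for any $b\neq a$ annihilates the right-hand side and produces $\sum_\alpha(\partial b_\alpha/\partial x^j_{(b)})\,\xi_\alpha^i(x_{(a)})=0$ for each $i$ and each $a\neq b$. Now fix $b$ and $j$, freeze $x_{(b)}$ as a parameter, and denote by $\overline X_\alpha$ the diagonal prolongation of $X_\alpha$ to the $m$-fold product indexed by $\{0,\ldots,m\}\setminus\{b\}$. The displayed system is then precisely the vanishing of the vector field $\sum_\alpha(\partial b_\alpha/\partial x^j_{(b)})\,\overline X_\alpha$ on this product, its coefficients being smooth functions that depend (a priori) on the frozen parameter $x_{(b)}$ as well.

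The permutation symmetry of the diagonal-prolongation construction propagates the generic linear independence assumed for the prolongation to $\mathbb{R}^{nm}$ to the prolongation over any $m$ out of the $m+1$ available copies. Hence the $\overline X_\alpha$'s are linearly independent at generic points, which forces each coefficient $\partial b_\alpha/\partial x^j_{(b)}$ to vanish on a dense open subset of $\mathbb{R}^{n(m+1)}$ and therefore, by smoothness, everywhere. Since $b$, $j$ and $\alpha$ are arbitrary, all partial derivatives of every $b_\alpha$ are zero, so each $b_\alpha$ is constant. The converse is immediate: with constants $b_\alpha$, the vector field $X$ is just the diagonal prolongation of $\sum_\alpha b_\alpha X_\alpha$.

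The step I expect to be the main obstacle is the bookkeeping in the second paragraph: the coefficients $\partial b_\alpha/\partial x^j_{(b)}$ depend a priori on all $m+1$ blocks of variables, and one must resist the temptation to treat them from the outset as functions on a reduced $m$-block space; they are functions on the full space that, for each frozen value of $x_{(b)}$, descend to the reduced space, where the independence hypothesis actually applies. The accompanying observation that generic linear independence of the diagonal prolongation is inherited under omission of any single copy is essentially trivial, but it is the hinge that makes the argument run uniformly in $b$ rather than only in the distinguished index $0$.
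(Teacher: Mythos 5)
Your proof is correct. Note that the paper itself does not prove this lemma -- it only cites \cite[Lemma 1]{CGM07} -- but your argument (express the condition $X=\widetilde Y$ in coordinates, differentiate with respect to the variables of a block $b$ to kill the right-hand side, recognize the result as the vanishing of $\sum_\alpha(\partial b_\alpha/\partial x^j_{(b)})\overline X_\alpha$ on the remaining $m$ blocks, and invoke generic pointwise linear independence plus smoothness to conclude all partials vanish) is exactly the standard route taken in that reference, and all the delicate points you flag (the coefficients living on the full space, the transfer of generic independence to any $m$ of the $m+1$ copies) are handled correctly.
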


It is worth noting that one can relate superposition rules to zero-curvature connections on a bundle ${\rm
pr}:(x_{(0)},\ldots,x_{(m)})\in\mathbb{R}^{n(m+1)}\mapsto(x_{(1)},\ldots,x_{(m)})\in\mathbb{R}^{nm}$ as
follows (cf. \cite{CGM07}).

\begin{proposition}\label{FolSup} Each superposition rule (\ref{superposition}) for a system $X$ is equivalent to
a local $n$-codimensional foliation on $\mathbb{R}^{n(m+1)}$ whose leaves project, by ${\rm pr}$, diffeomorphically
onto $\mathbb{R}^{nm}$ and such that the vector fields $\{\widetilde X_t\}_{t\in\mathbb{R}}$ are tangent to its leaves.
\end{proposition}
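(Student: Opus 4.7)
The plan is to establish the equivalence by passing explicitly between a superposition rule $\Phi$ and a foliation of the stated kind, checking that the three requirements on the foliation (codimension $n$, leaves projecting diffeomorphically via ${\rm pr}$, and tangency of $\widetilde X_t$) correspond term-by-term to the defining features of a superposition rule. In both directions the implicit function theorem, applied to the system $x_{(0)}=\Phi(x_{(1)},\ldots,x_{(m)};k)$ in one case and to $F_i(x_{(0)},\ldots,x_{(m)})=k_i$ in the other, is the main analytic tool.

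For the direction \emph{superposition rule $\Rightarrow$ foliation}, I would locally invert $\Phi$ in the arguments $k_1,\ldots,k_n$ (which is exactly what `generic' affords in the definition) to obtain $n$ functionally independent functions $G_1,\ldots,G_n\colon\mathbb{R}^{n(m+1)}\to\mathbb{R}$ characterised implicitly by $x_{(0)}=\Phi(x_{(1)},\ldots,x_{(m)};G_1,\ldots,G_n)$. Their common level sets form a local foliation $\mathcal{F}$ of codimension $n$ whose leaves are the graphs $\{(\Phi(x_{(1)},\ldots,x_{(m)};k),x_{(1)},\ldots,x_{(m)})\}$, and these graphs project diffeomorphically onto $\mathbb{R}^{nm}$ by ${\rm pr}$ because they are, by construction, smooth sections of ${\rm pr}$. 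Tangency of $\widetilde X_t$ to the leaves then reduces to the following observation: for any particular solutions $x_{(1)}(t),\ldots,x_{(m)}(t)$ of $X$ and any constants $k$, setting $x_{(0)}(t)=\Phi(x_{(1)}(t),\ldots,x_{(m)}(t);k)$ yields an integral curve of $\widetilde X$, since each component satisfies $X$; this integral curve lies entirely in the leaf $\{G_i=k_i\}$, and as a generic point of $\mathbb{R}^{n(m+1)}$ sits on such a curve, $\widetilde X_t$ is tangent to $\mathcal{F}$.

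For the converse, I would pick local first integrals $F_1,\ldots,F_n$ defining the foliation. The diffeomorphism requirement on ${\rm pr}$ is equivalent to the non-vanishing of the Jacobian $\partial(F_1,\ldots,F_n)/\partial(x^1_{(0)},\ldots,x^n_{(0)})$, so the implicit function theorem produces a map $\Phi$ by solving $F_i=k_i$ for $x_{(0)}$. To check that $\Phi$ is a superposition rule I would use that ${\rm pr}$ is $\widetilde X_t$-related to the diagonal prolongation of $X_t$ to $\mathbb{R}^{nm}$, whose integral curves are precisely the $m$-tuples of particular solutions of $X$; combining tangency of $\widetilde X_t$ to the leaves with the leaf-wise diffeomorphism property of ${\rm pr}$, every such $m$-tuple lifts uniquely through a chosen leaf to an integral curve of $\widetilde X$, and reading off the first factor shows that $\Phi(x_{(1)}(t),\ldots,x_{(m)}(t);k)$ is indeed a solution of $X$ for each $k$.

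The principal obstacle I anticipate is handling the word \emph{local} and the genericity assumption cleanly, since in either direction the implicit function theorem supplies only a local statement, which matches the qualifier in the proposition but must be tracked consistently. A subtler point in the converse direction is to argue that $\Phi$ captures the \emph{general} solution of $X$; this reduces to the observation that $\bigl((x_{(1)},\ldots,x_{(m)}),k\bigr)$ provides coordinates on $\mathbb{R}^{n(m+1)}$ transverse to the leaves, so any initial condition for $x_{(0)}$ singles out a unique leaf, after which the lifting argument above produces the corresponding solution.
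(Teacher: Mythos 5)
The paper itself gives no proof of this proposition---it is stated with a ``cf.~\cite{CGM07}''---so there is nothing internal to compare against; your argument is, in substance, the standard one from that reference. Both directions are sound: inverting $\Phi$ in the constants $k$ to get the defining functions of the foliation, identifying the leaves as graphs of $\Phi(\cdot;k)$ (hence sections of ${\rm pr}$), deducing tangency from the fact that $(\Phi(x_{(1)}(t),\ldots,x_{(m)}(t);k),x_{(1)}(t),\ldots,x_{(m)}(t))$ is an integral curve of $\widetilde X$ lying in a leaf, and, conversely, using that $\widetilde X_t$ is ${\rm pr}$-related to the diagonal prolongation of $X_t$ on $\mathbb{R}^{nm}$ together with the leafwise diffeomorphism to lift $m$-tuples of solutions and recover the general solution from the transversality of the leaves to the $x_{(0)}$-fibres. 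Your closing remarks correctly flag the only delicate points (locality, genericity, and surjectivity of $k\mapsto\Phi(p;k)$ needed to capture \emph{all} solutions), so the proposal is complete and consistent with the intended proof.
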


The above result can be used to easily prove the following new result which is used posteriorly in order to
analyse the existence of a particular class of superposition rules for systems of SODEs.

\begin{proposition}\label{CSR} A family of Lie systems admits a common superposition rule if and only if they admit
a common Vessiot--Guldberg Lie algebra.
\end{proposition}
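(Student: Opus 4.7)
The plan is to handle each implication by applying Proposition~\ref{FolSup}, which translates a superposition rule into a foliation $\mathcal{F}$ on $\mathbb{R}^{n(m+1)}$ of codimension $n$, whose leaves project diffeomorphically onto $\mathbb{R}^{nm}$ via $\mathrm{pr}$ and such that the diagonal prolongations of the system are tangent to its leaves.

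For the easier implication, suppose the family $\{X^{(\beta)}\}_{\beta}$ of Lie systems admits a common Vessiot--Guldberg Lie algebra $V$, i.e. $X^{(\beta)}_t\in V$ for all $t$ and $\beta$. I would apply the construction of Steps 1--5 of Section~\ref{FLS} to $V$, choosing $m$ so that the diagonal prolongations to $\mathbb{R}^{nm}$ of a basis of $V$ are linearly independent at a generic point. The $n$ resulting functionally independent first integrals $F_1,\dots,F_n$ on $\mathbb{R}^{n(m+1)}$ are annihilated by the diagonal prolongation of every element of $V$, and in particular by each $\widetilde{X}^{(\beta)}_t$. Hence the superposition rule they define is common to every system of the family.

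For the converse, assume the family shares a superposition rule involving $m$ particular solutions. By Proposition~\ref{FolSup} it provides a foliation $\mathcal{F}$ as above with all $\widetilde{X}^{(\beta)}_t$ tangent to its leaves. Put $V:=\mathrm{Lie}(\{X^{(\beta)}_t:t\in\mathbb{R},\,\beta\})$. Since $X\mapsto\widetilde{X}$ intertwines Lie brackets, and the bracket of two vector fields tangent to an involutive distribution is again tangent to it, every $\widetilde{Y}$ with $Y\in V$ is tangent to $\mathcal{F}$. The task reduces to showing $\dim V<\infty$, for then $V$ serves as a common Vessiot--Guldberg Lie algebra.

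The key step---and the main obstacle---is this finite-dimensionality, which I would establish by an evaluation argument. Fix a generic point $p_0=(x_{(0)}^0,\dots,x_{(m)}^0)$ and express $\mathcal{F}$ locally as $F_1=k_1,\dots,F_n=k_n$ with $\det\bigl(\partial F_i/\partial x^j_{(0)}\bigr)\neq 0$ near $p_0$; such a defining system exists because the leaves project diffeomorphically onto $\mathbb{R}^{nm}$. For $Y\in V$, tangency of $\widetilde{Y}$ to $\mathcal{F}$ then reads
\begin{equation*}
\sum_{j=1}^{n}\frac{\partial F_i}{\partial x^j_{(0)}}(p)\,Y^j(x_{(0)})=-\sum_{a=1}^{m}\sum_{j=1}^{n}\frac{\partial F_i}{\partial x^j_{(a)}}(p)\,Y^j(x_{(a)}),\qquad i=1,\dots,n.
\end{equation*}
Consider the evaluation $\mathrm{ev}:V\to\bigoplus_{a=1}^{m}T_{x_{(a)}^0}\mathbb{R}^n$, $Y\mapsto(Y(x_{(1)}^0),\dots,Y(x_{(m)}^0))$. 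If $Y$ lies in its kernel, letting $p$ vary in the fibre $\mathrm{pr}^{-1}(x_{(1)}^0,\dots,x_{(m)}^0)$ makes the right-hand side above vanish identically in $x_{(0)}$, so invertibility of $(\partial F_i/\partial x^j_{(0)})$ forces $Y(x_{(0)})=0$ for generic $x_{(0)}$, hence $Y\equiv 0$. Therefore $\mathrm{ev}$ is injective and $\dim V\leq nm$, as desired.
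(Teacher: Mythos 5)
Your proof is correct and follows exactly the route the paper intends: the paper merely asserts that Proposition~\ref{CSR} ``can be easily proved'' from Proposition~\ref{FolSup}, and you supply the missing details --- the common first-integral construction of Section~\ref{FLS} for the ``if'' part, and for the ``only if'' part the foliation-tangency plus evaluation argument (the core of the Lie--Scheffers proof in \cite{CGM07}) showing that $V={\rm Lie}(\{X^{(\beta)}_t\})$ injects into $\bigoplus_{a=1}^m{\rm T}_{x^0_{(a)}}\mathbb{R}^n$ and hence has dimension at most $nm$. The only loose point is that your defining functions $F_i$ live in a single chart near $p_0$, so the conclusion $Y(x_{(0)})=0$ for \emph{generic} $x_{(0)}$ should be obtained by running the same transversality argument at each generic point of the fibre ${\rm pr}^{-1}(x^0_{(1)},\ldots,x^0_{(m)})$ (which works because every leaf projects diffeomorphically onto $\mathbb{R}^{nm}$), a repair that is immediate.
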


\section{On the general definition of a superposition rule for SODEs}\label{Justify}
\setcounter{equation}{0}
To motivate the general definition of a superposition rule for SODEs, let us start by analysing a particular property
of standard superposition rules. It is well known that every homogeneous linear system on $\mathbb{R}^n$ of
the form
\begin{equation}\label{linear}
\frac{dx^i}{dt}=\sum_{j=1}^n A^i\,_j(t)x^j,\qquad i=1,\ldots,n,
\end{equation}
where the $A^i\,_j$ are real $t$-dependent functions, admits its general solution $x(t)$ to be written as
\begin{equation}\label{linearSup}
x(t)=k_1x_{(1)}(t)+\ldots+k_nx_{(n)}(t),
\end{equation}
with $x_{(1)}(t),\ldots,x_{(n)}(t)$ being a family of linearly independent particular solutions of
(\ref{linear}) and $k_1,\ldots,k_n$ a set of real constants. In other words, the system (\ref{linear}) admits a
{\it linear superposition rule}. This leads to the existence of nonlinear systems admitting general {\it
superposition rules} \cite{Win83}. Indeed, every diffeomorphism $\phi:\mathbb{R}^n\ni x\mapsto
z\in\mathbb{R}^n$ transforms system (\ref{linear}) into
\begin{equation}\label{nolinear}
\frac{dz^i}{dt}=F^i(t,z),\qquad i=1,\ldots,n,
\end{equation}
where the functions $F^i:\mathbb{R}^{n+1}\rightarrow\mathbb{R}$ are generally nonlinear in the variables
$z^1,\ldots,z^n$, and, what is more important, whose general solution $z(t)$ can be
expressed (maybe nonlinearly)  as
$$
z(t)=\phi(k_1 \phi^{-1}(z_{(1)}(t))+\ldots+k_n\phi^{-1}(z_{(n)}(t))),
$$
in terms of certain families of particular solutions $z_{(1)}(t),\ldots,z_{(n)}(t)$ of (\ref{nolinear}) and
the constants $k_1,\ldots,k_n$. That is, since linearity depends on coordinate systems and the existence of
superposition rules does not (recall the Lie--Scheffers theorem), the mere existence of linear superposition
rules for homogeneous linear systems of first-order differential equations leads to the existence of nonlinear
systems admitting superposition rules. In addition, it is worth noting that not every system admitting a
nonlinear superposition rule is of this form. For instance, Riccati equations admit a superposition rule, but
they cannot always be transformed diffeomorphically  into linear homogeneous systems \cite{RDM}.

The aforementioned properties have an analogue for systems of SODEs. In fact, it can easily be proved that every
homogeneous linear system of SODEs
\begin{equation}\label{SecondLinear}
\frac{d^2x^i}{dt^2}=\sum_{j=1}^n \left(A^i\,_j(t)\frac {d x^j}{dt} +B^i\,_j(t)x^j\right),\qquad i=1,\ldots,n,
\end{equation}
with $A^i\,_j$ and $B^i\,_j$ being any set of $2n^2$ time-dependent functions, admits its general
solution to be written as
\begin{equation}\label{SupSODElinear}
x(t)=k_1x_{(1)}(t)+\ldots+k_{2n}x_{(2n)}(t),
\end{equation}
in terms of some arbitrary constants $k_1,\ldots,k_{2n}$ and a set of solutions
$\{x_{(a)}(t)\,|\,a=1,\ldots,2n\}$ such that the vectors $(x_{(a)}(t),dx_{(a)}(t)/dt)\in{\rm T}\mathbb{R}^n$
are linearly independent at every $t\in\mathbb{R}$. Now, a change of variables $z=\phi(x)$ transforms the
above system into a (generally nonlinear) new one
\begin{equation}\label{SecondNolinear}
\frac{d^2z^i}{dt^2}=H^i\left(t,z,\frac{dz}{dt}\right),\qquad i=1,\ldots,n,
\end{equation}
for certain functions $H^i:{\rm T}\mathbb{R}^{n}\times\mathbb{R}\rightarrow \mathbb{R}$, and, moreover, such a
change enables us, in view of (\ref{SupSODElinear}), to write its general solution $z(t)$ in the form
\begin{equation}\label{nolinearSupSeg}
z(t)=\phi(k_1\phi^{-1}(z_{(1)}(t))+\ldots+k_{2n}\phi^{-1}(z_{(2n)}(t))),
\end{equation}
in terms of a generic family of particular solutions $z_{(1)}(t),\ldots,z_{(2n)}(t)$ for
(\ref{SecondNolinear}) and constants $k_1,\ldots,k_{2n}$. Consequently, linear superposition rules
for systems (\ref{SecondLinear}) give rise to the existence of `superposition rule-like' expressions for
systems of SODEs. Expressions of this type frequently appear in the literature, e.g. in the study of linear
inhomogeneous systems of SODEs. This suggests us the following definition that was proposed and briefly
analysed in \cite{CL10SecOrd2} and that includes the previous expressions as particular cases.

\begin{definition}
A {\it base-superposition rule} for a system
\begin{equation}\label{SODE}
\frac{d^2x^i}{dt^2}=F^i\left(t,x,\frac{dx}{dt}\right), \qquad i=1,\ldots,n,
\end{equation} is a map
$\Upsilon:(\mathbb{R}^{n})^m\times\mathbb{R}^{2n}\rightarrow\mathbb{R}^n$ allowing us to write its general
solution $x(t)$ as
\begin{equation}\label{FreeVel}
x(t)=\Upsilon(x_{(1)}(t),\ldots,x_{(m)}(t);k_1,\ldots,k_{2n}),
\end{equation}
where $x_{(1)}(t),\ldots,x_{(m)}(t)$ is a generic family of particular solutions and $k_1,\ldots,k_{2n}$ are constants.
\end{definition}

The above concept does not cover many other expressions found in the literature for describing systems of
SODEs \cite{CL10SecOrd2,CLNE07,CL08MP,SIGMA}. For instance, consider a Milne--Pinney equation
\begin{equation}\label{MilnePinney}
\frac{d^2x}{dt^2}=-\omega^2(t)x+\frac{c}{x^3},
\end{equation}
with $x>0$ and $\omega(t)$ being any time-dependent real function  \cite{Mil30}-\cite{Re99}. This equation is relevant due to its applications in quantum mechanics, cosmology, Bose--Einstein condensates, and other physical topics \cite{CL08MP,SIGMA,AL08}. Recently, it was proved
(see \cite{CL08MP}) that its general solution can be written as
\begin{equation}\label{MilnSup}
 x(t)=\left\{k_1x_{(1)}^2(t)+\!k_2x_{(2)}^2(t)\!\pm\! 2\left[\lambda_{12}[I_3x_{(1)}^2(t)x_{(2)}^2(t)-\!c(x_{(1)}^4(t)\!+\!x_{(2)}^4(t))]\right]^{1/2}\right\}^{1/2}\!\!,
\end{equation}
by means of a generic pair $x_{(1)}(t),x_{(2)}(t)$ of particular solutions, the function
$\lambda_{12}=\lambda_{12}(k_1,k_2,c,I_3)$, the constant of motion
$$I_3=\left(\frac{dx_{(1)}}{dt}(t)x_{(2)}(t)-\frac{dx_{(2)}}{dt}(t)x_{(1)}(t)\right)^2+
c\left[\left(\frac{x_{(1)}(t)}{x_{(2)}(t)}\right)^2+\left(\frac{x_{(2)}(t)}{x_{(1)}(t)}\right)^2\right],
$$
and two constants $k_1, k_2$ related to initial conditions. Observe that expression (\ref{MilnSup}) cannot be
described by means of any base-superposition rule notion. Indeed, while $k_1, k_2$ take different values to
describe the different particular solutions of (\ref{MilnSup}), the constant $I_3$, whose value is fixed by
the chosen particular solutions and their time-derivatives, does not appear in base-superposition rules. The
same will happen for other new relevant expressions to be presented in this work. This motivates us to
generalize the base-superposition rule as follows.

\begin{definition}
A {\it quasi-base superposition rule} for a system of SODEs in $\mathbb{R}^n$ of the form (\ref{SODE}) is a function $G:\mathbb{R}^{mn}\times\mathbb{R}^{q}\times\mathbb{R}^{2n}\rightarrow\mathbb{R}^n$ allowing us to
cast its general solution $x(t)$ in the form
\begin{equation}\label{express1}
x(t)=G(x_{(1)}(t),\ldots,x_{(m)}(t),I_1,\ldots,I_{q};k_1,\ldots,k_{2n}),
\end{equation}
in terms of any generic family $x_{(1)}(t),\ldots,x_{(m)}(t)$ of particular solutions of (\ref{SODE}), a set
of time-independent constants of motion $I_1,\ldots,I_{q}$, whose values are determined by the choice of the
previous family and their derivatives with respect to the time, and a set of constants $k_1,\ldots,k_{2n}$.
\end{definition}

Although almost every example of `superposition-rule like' expression for SODEs is a particular instance of
a quasi-base superposition rule, this notion still fails to cover several expressions found in the literature.
That is the case of the very recently discovered expression for second-order Riccati equations, presented in
\cite{CL10SecOrd2}, which describes the general solution of such equations in terms of a generic family of
particular solutions, their derivatives, and several constants. This motivates us to generalize the
the concept of a quasi-base superposition rule as follows.

\begin{definition}\label{Def1} A system of second-order ordinary differential equations on $\mathbb{R}^n$ given by (\ref{SODE}) admits a {\it superposition rule} if there exists a map $\Upsilon:({\rm
T}\mathbb{R}^{n})^{m}\times\mathbb{R}^{2n}\rightarrow \mathbb{R}^n$ of the form
\begin{equation}\label{SupSODE1}
x=\Upsilon(x_{(1)},v_{(1)},\ldots,x_{(m)},v_{(m)};k_1,\ldots,k_{2n}),
\end{equation}
with $(x_{(a)},v_{(a)})\in {\rm T}_{x_{(a)}}\mathbb{R}^n$ for $a=1,\ldots,m$, such that the general solution
$x(t)$ of (\ref{SODE}) can be written as
\begin{equation}\label{SupSODE}
x(t)=\Upsilon\left(x_{(1)}(t),\frac{dx_{(1)}}{dt}(t),\ldots,x_{(m)}(t),\frac{dx_{(m)}}{dt}(t);k_1,\ldots,k_{2n}\right),
\end{equation}
with $x_{(1)}(t),\ldots,x_{(m)}(t)$ being any generic family of $m$ particular solutions of the system, and
$k_1,\ldots,k_{2n}$ being a set of constants related to the initial conditions of each particular solution.
\end{definition}

As every constant of motion involved in a quasi-base superposition rule can be considered as a function on
${\rm T}\mathbb{R}^{mn}$, quasi-base superposition rules can be easily regarded as a particular type of
superposition rules for SODEs. Base-superposition rules can also be regarded as
superposition rules that depend only on the base variables of ${\rm T}\mathbb{R}^{nm}$ that justifies
their name.

Let us now turn to analysing several properties of superposition rules for systems of SODEs. Similarly to
superposition rules for systems of first-order differential equations, expression (\ref{SupSODE}) cannot be
applied for each family of $m$ particular solutions. Recall that even in the simple case of a homogeneous linear
system of SODEs, expression (\ref{linearSup}) just remains valid for certain families of particular solutions.
Consequently, in order to establish when a system (\ref{SODE}) admits a superposition rule, it is essential to
establish what `generic' means in this new context. From now on, we say that (\ref{SupSODE}) is satisfied by a
generic set of $m$ particular solutions if there exists an open and dense subset $U$ of $({\rm
T}\mathbb{R}^{n})^m$ such that expression (\ref{SupSODE}) is valid for every family
$x_{(1)}(t),\ldots,x_{(m)}(t)$ that satisfies
$$
\left(x_{(1)}(0),\frac{dx_{(1)}}{dt}(0),\ldots,x_{(m)}(0),\frac{dx_{(m)}}{dt}(0)\right)\in U\subset ({\rm
T}\mathbb{R}^n)^m.
$$
Every family of particular solutions satisfying the above condition is called a {\it fundamental system} of
particular solutions of system (\ref{SODE}).

Superposition rules for systems of SODEs possess properties different from those of superposition rules for systems of
first-order ones. Let us illustrate this fact by means of a particular remarkable difference. Consider again
(\ref{LieSystem}) as a Lie system admitting superposition rule (\ref{superposition}). A time-reparametrisation
$\tau=\tau(t)$, with the inverse $t=t(\tau)$, transforms this system into
\begin{equation}\label{transf}
\frac{dx^i}{d\tau}=\frac{dt}{d\tau}F^i(t(\tau),x),\qquad i=1,\ldots,n.
\end{equation}
As we assume (\ref{LieSystem}) to be a Lie system, formula (\ref{LieDecom}) applies and the right-hand term
of the above expression can be brought into the form
\begin{equation}\label{De2}
\frac{dx^i}{d\tau}=\frac{dt}{d\tau}\left(Z_1(t(\tau))\xi^i_1(x)+\ldots+Z_r(t(\tau))\xi^i_r(x)\right),\qquad
i=1,\ldots,n.
\end{equation}
Consequently, in view of the Lie--Scheffers theorem, the system  (\ref{transf}) becomes a Lie system. Moreover, as the
general solution $x(t)$ of  (\ref{LieSystem}) and the general solution $x(\tau)$ of (\ref{transf}) satisfy
$x(t(\tau))=x(\tau)$,  then the superposition rule (\ref{superposition}) for (\ref{LieSystem}) allows one to
write
$$
x(\tau)=\Phi(x_{(1)}(\tau),\ldots,x_{(m)}(\tau);k_1,\ldots,k_n),
$$
in terms of a generic family  $x_{(1)}(\tau),\ldots,x_{(m)}(\tau)$  of particular solutions of system (\ref{transf})
and $k_1,\ldots,k_n$. In summary, Lie's characterization of systems of first-order ordinary differential
equations admitting a superposition rule is invariant under time-reparametrisations and Lie systems related in
this way share a common superposition rule. Indeed, note that this follows trivially from the form of
(\ref{De2}) and Proposition \ref{CSR}.

The above property is no longer valid for superposition rules of systems of SODEs. Given a system of SODEs
admitting a superposition rule, the systems obtained from it by time-reparametrisations do not necessarily
possess the same superposition rule. For instance, consider a system of SODEs (\ref{SODE}) admitting a
superposition rule (\ref{SupSODE}). A time-reparametrisation $\tau=\tau(t)$, with the inverse $t=t(\tau)$,
transforms (\ref{SODE}) into
\begin{equation}\label{transfSecOrd}
\frac{d^2x^i}{d\tau^2}=\frac{d^2t}{d\tau^2}\frac{dx^i}{d\tau}\frac{d\tau}{dt}+
\left(\frac{dt}{d\tau}\right)^2F^i\left(t(\tau),x,\frac{dx}{d\tau}\frac{d\tau}{dt}\right),\quad
i=1,\ldots,n,
\end{equation}
whose general solution $x(\tau)$ satisfies $x(\tau)=x(t(\tau))$, where $x(t)$ is the general solution of
(\ref{SODE}). Hence, from the superposition rule (\ref{SupSODE}), we get that $x(\tau)$ can be expressed as
$$
x(\tau)=\Upsilon\left(x_{(1)}(\tau),\frac{d\tau}{dt}(t(\tau))\frac{dx_{(1)}}{d\tau}(\tau),
\ldots,x_{(m)}(\tau),\frac{d\tau}{dt}(t(\tau))\frac{dx_{(m)}}{d\tau}(\tau);k_1,\ldots,k_{2n}\right).
$$
The above expression is not necessarily a superposition rule, as it may admit an explicit dependence on the
new time variable $\tau$. A simple example illustrating this fact can be found in Section \ref{Examples}.

Obviously, we could have also required the superposition rule concept for systems of SODEs to be invariant
under time-reparametrisations, but this would exclude several important examples like second-order Riccati or
Milne--Pinney equations \cite{CL10SecOrd2, SIGMA}.

\section{On the existence of superposition rules for SODEs}\label{TheorySODE}
\setcounter{equation}{0}
The following theorem characterizes systems of SODEs admitting a superposition rule. We hereafter use canonical
global coordinates $(x^1,\ldots,x^n,v^1,\ldots,v^n)$ on ${\rm T}\mathbb{R}^n$. By defining this coordinate
system on each copy of ${\rm T}\mathbb{R}^n$ within $({\rm T}\mathbb{R}^n)^m$, we obtain a coordinate system
$\{x^i_{(a)},v^i_{(a)}\,|\,i=1,\ldots,n,\,\,a=1,\ldots,m\}$ on $({\rm T}\mathbb{R}^n)^m$.

\begin{theorem}\label{MT}A mapping $\Upsilon:({\rm T}\mathbb{R}^n)^m\times\mathbb{R}^{2n}\rightarrow\mathbb{R}^n$
is a superposition rule for a system of SODEs (\ref{SODE}) if and only if
\begin{enumerate}
 \item the functions $u_k:({\rm T}\mathbb{R}^n)^m\ni p\mapsto u_k(p)=\Upsilon(p;k)\in\mathbb{R}^n$, with $k\in\mathbb{R}^{2n}$,
 are common solutions for the $t$-parametrized family of systems of PDEs on $({\rm T}\mathbb{R}^n)^m$ given by
\begin{equation}\label{Char}
 \qquad (X^{(m)}_D)_t^2u^i_k+(X^{(m)}_L)_tu^i_k=F^i(t,u_k,(X^{(m)}_D)_tu_k),\qquad i=1,\ldots,n,
\end{equation}
where $u_k=(u_k^1,\ldots,u_k^n)\in\mathbb{R}^n$ and $X^{(m)}_D, X^{(m)}_L$ are the diagonal prolongations to
$({\rm T}\mathbb{R}^n)^m$ of the time-dependent vector fields
\begin{equation}\label{DefVec}
 \qquad  X_D=\sum_{i=1}^n\left(v^i\frac{\partial}{\partial {x^i}}+F^{i}(t,x,v)\frac{\partial}{\partial
{v^{i}}}\right),\,\,\,\, X_L=\sum_{i=1}^n\partial_tF^i(t,x,v)\frac{\partial}{\partial v^i}\,
\end{equation}
and
\item the map $\varphi:({\rm T}\mathbb{R}^n)^m\times\mathbb{R}^{2n}\rightarrow{\rm T}\mathbb{R}^n$ of the form
\begin{equation}\label{Fun}
\varphi(p;k)=(u_k(p),[(X_D^{(m)})_0u_k](p))\in {\rm T}_{u_k(p)}\mathbb{R}^n
\end{equation}
gives rise to a family of bijections $\varphi_p:k\in\mathbb{R}^{2n}\mapsto \varphi(p;k)\in{\rm
T}\mathbb{R}^{n}$, with $p$ being a generic point of $({\rm T}\mathbb{R}^n)^m$.
\end{enumerate}
\end{theorem}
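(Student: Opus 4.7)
The plan is to recast the superposition condition in terms of time-derivatives of $\Upsilon$ along the natural lift to $(T\mathbb{R}^n)^m$ of an $m$-tuple of particular solutions. First I would observe that a curve $x(t)$ solves (\ref{SODE}) if and only if $(x(t),dx/dt)$ is an integral curve of the suspension of the time-dependent vector field $X_D$ in (\ref{DefVec}); this is the standard reduction of a SODE to a first-order system on $T\mathbb{R}^n$. Consequently, if $x_{(1)}(t),\ldots,x_{(m)}(t)$ solve (\ref{SODE}), the joint curve $p(t)=(x_{(1)}(t),\dot x_{(1)}(t),\ldots,x_{(m)}(t),\dot x_{(m)}(t))$ is an integral curve of the suspension of the diagonal prolongation $X^{(m)}_D$, and $u_k(p(t))=\Upsilon(p(t);k)$ is the candidate solution produced by $\Upsilon$.

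Next, for any smooth $t$-independent function $u:(T\mathbb{R}^n)^m\to\mathbb{R}$ I would derive by the chain rule along $p(t)$ the identities
\begin{equation*}
\frac{d}{dt}u(p(t))=[(X^{(m)}_D)_t u](p(t)),\qquad \frac{d^2}{dt^2}u(p(t))=\bigl([(X^{(m)}_D)_t]^2 u+(X^{(m)}_L)_t u\bigr)(p(t)).
\end{equation*}
The first identity is just the definition of the suspension. For the second, I would apply $d/dt$ to $(X^{(m)}_D)_t u$, noting that this function depends on $t$ both through $p(t)$ and through the explicit $t$-dependence of $X^{(m)}_D$ hidden inside the $F^i$; the contribution from the latter is $\partial_t[(X^{(m)}_D)_t u]=(X^{(m)}_L)_t u$, precisely by the definition of $X_L$ in (\ref{DefVec}).

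With these identities in hand the equivalence becomes bookkeeping. For necessity, writing $x(t)=u_k(p(t))$ and imposing $\ddot x^i=F^i(t,x,\dot x)$ turns the SODE into (\ref{Char}) as a pointwise identity along $p(t)$; the genericity built into the definition of a superposition rule means the points $p(0)$ cover an open dense subset $U\subset(T\mathbb{R}^n)^m$, so by continuity (\ref{Char}) holds identically on $(T\mathbb{R}^n)^m$ for each fixed $t$. The definition also forces $k$ to be in bijection with the initial data $(x(0),\dot x(0))\in T\mathbb{R}^n$ of the corresponding solution, which is precisely the bijectivity of $\varphi_p$ in (\ref{Fun}). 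For sufficiency, condition~1 together with the chain-rule identities above shows that $x(t):=u_k(p(t))$ satisfies (\ref{SODE}); given any further solution $\tilde x(t)$, condition~2 supplies a unique $k$ with $\varphi_{p(0)}(k)=(\tilde x(0),\dot{\tilde x}(0))$, so $u_k(p(t))$ and $\tilde x(t)$ have identical initial data and uniqueness for the IVP of (\ref{SODE}) forces them to coincide.

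The main obstacle is the second half of Step~2: correctly tracking the two sources of $t$-dependence when differentiating $(X^{(m)}_D)_t u$ a second time along $p(t)$, which is where the auxiliary vector field $X_L$ enters the story in a non-obvious way. A subsidiary point of care is the passage from (\ref{Char}) holding along every admissible $p(t)$ to (\ref{Char}) holding identically on $(T\mathbb{R}^n)^m$, which relies on the genericity hypothesis to supply an open dense locus of initial data and then on continuity of both sides of (\ref{Char}).
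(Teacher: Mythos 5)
Your proposal is correct and follows essentially the same route as the paper's proof: the same chain-rule identities for $\frac{d}{dt}u(p(t))$ and $\frac{d^2}{dt^2}u(p(t))$ (with $X_L$ arising from the explicit $t$-dependence of the $F^i$ inside $X_D^{(m)}$), the same density-plus-continuity argument to pass from the identity along generic curves $p(t)$ to the identity on all of $({\rm T}\mathbb{R}^n)^m$, and the same use of existence and uniqueness for the initial value problem in both the bijectivity of $\varphi_p$ and the sufficiency direction. The paper merely carries out the second-derivative computation explicitly in coordinates rather than packaging it as you do.
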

\begin{proof} Assume that the SODE Lie system (\ref{SODE}) has a superposition rule (\ref{superposition}).
One can define the function $u_k:p\in({\rm T}\mathbb{R}^n)^m\mapsto \Upsilon(p;k)\in \mathbb{R}^n$,
for each $k\in\mathbb{R}^{2n}$,  which leads, for every fundamental system of solutions $x_{(1)}(t),\ldots,x_{(m)}(t)$
of (\ref{SODE}), to a new particular solution of this system,
\begin{equation}\label{NewSol}
\bar x(t)=u_k\left(x_{(1)}(t),\frac{dx_{(1)}}{dt}(t),\ldots,x_{(m)}(t),\frac{dx_{(m)}}{dt}(t)\right).
\end{equation}
On the other hand,
\begin{equation}\label{eq0}
\frac{d\bar{x}^i}{dt}(t)=\sum_{a=1}^m\sum_{j=1}^n\left(v^j_{(a)}\frac{\partial
u_k^i}{\partial{x^j_{(a)}}}+F^j_{(a)}\frac{\partial u_k^i}{\partial {v^j_{(a)}}}\right)(p(t)),\qquad
i=1,\ldots,n,
\end{equation}
where, for shortening the notation, we have denoted $F_{(a)}^j=F^j(t,x_{(a)},v_{(a)})$ and
\begin{equation}\label{Setting}
p(t)=\left(x_{(1)}(t),\frac{dx_{(1)}}{dt}(t),\ldots,x_{(m)}(t),\frac{dx_{(m)}}{dt}(t)\right).
\end{equation}
From the expression of $X_D$ given in (\ref{DefVec}), it follows
\begin{equation}\label{eq1}
\frac{d\bar{x}^i}{dt}(t)=[(X^{(m)}_D)_tu^i_k](p(t)),\qquad i=1,\ldots,n.
\end{equation}
By differentiating expression (\ref{eq0}) with respect to the time, we obtain
\begin{eqnarray*}
\!\!\!\! \!\frac{d^2\bar{x}^i}{dt^2}(t)=\left[\sum_{j,l=1}^n\sum_{a,b=1}^m\left(v^j_{(a)}v^l_{(b)}\frac{\partial^2
u^i_k}{\partial {x^j_{(a)}\partial x^l_{(b)}}}+2v^j_{(a)}F^l_{(b)}\frac{\partial^2u^i_k}{\partial
x^j_{(a)}\partial v^l_{(b)}}+F^j_{(a)}F^l_{(b)}\frac{\partial^2 u^i_k}{\partial v^j_{(a)}\partial v^l_{(b)}
}\right)\right.\\
\!\!\!\! \! \left. +\sum_{a=1}^m\sum_{j=1}^n\left(F^j_{(a)}\frac{\partial u^i_k}{\partial
x^j_{(a)}}+\frac{\partial F^j_{(a)}}{\partial t}\frac{\partial u^i_k}{\partial
v^j_{(a)}}\right)+\sum_{a=1}^m\sum_{j,l=1}^n\left(v^l_{(a)}\frac{\partial
F^j_{(a)}}{\partial{x^l_{(a)}}}+F^l_{(a)}\frac{\partial F^j_{(a)}}{\partial{v^l_{(a)}}}\right)\frac{\partial
u^i_k}{\partial v^j_{(a)}}\right](p(t)).
\end{eqnarray*}
If we compare the above expression with
\begin{equation*}
\begin{aligned}(X_D^{(m)})_t^2u_k^i=\sum_{j,l=1}^n\sum_{a,b=1}^m\left(v^j_{(a)}\frac{\partial}{\partial {x^j_{(a)}}}+
F^{j}_{(a)}\frac{\partial}{\partial {v^{j}}_{(a)}}\right)\left(v^l_{(b)}\frac{\partial}{\partial {x^l_{(b)}}}+
F^{l}_{(b)}\frac{\partial}{\partial {v^{l}}_{(b)}}\right)u^i_k\\
\,\,\qquad\qquad=\sum_{j,l=1}^n\sum_{a,b=1}^m\left(v^j_{(a)}v^l_{(b)}\frac{\partial^2u^i_k}{\partial x^j_{(a)}\partial
x^l_{(b)}}+2v^j_{(a)}F^l_{(b)}\frac{\partial^2 u^i_k}{\partial x^j_{(a)}\partial
v^l_{(b)}}+F^j_{(a)}F^l_{(b)}\frac{\partial^2u^i_k}{\partial v^j_{(a)}\partial
v^l_{(b)}}\right)
\\\qquad\qquad\,\,\qquad\qquad+\sum_{a=1}^m\sum_{j=1}^nF^j_{(a)}\frac{\partial u^i_k}{\partial
x^{j}_{(a)}}+ \sum_{a=1}^m\sum_{j,l=1}^n\left(v^j_{(a)}\frac{\partial F^{l}_{(a)}}{\partial {x^j_{(a)}}
}+F^j_{(a)}\frac{\partial F^{l}_{(a)}}{\partial {v^j_{(a)}}}\right)\frac{\partial u^i_k}{\partial
v^{l}_{(a)}}\,,
\end{aligned}
\end{equation*}we obtain
\begin{equation}\label{eq2}
\frac{d^2\bar{x}^i}{dt^2}(t)=((X_D^{(m)})_t^2u_k^i+(X_L^{(m)})_tu_k^i)(p(t)),\qquad i=1,\ldots,n.
\end{equation}
As $\bar x(t)$ is a solution of system (\ref{SODE}), and in view of expressions (\ref{eq1}) and (\ref{eq2}), it
turns out  that
\begin{equation}\label{QuasiEq}
((X_D^{(m)})_t^2u_k^i+(X_L^{(m)})_tu_k^i)(p(t))=F^i\left(t,u_k(p(t)),((X_D^{(m)})_tu_k)(p(t))\right).
\end{equation}
Now, equation (\ref{QuasiEq}) holds for every fundamental system. This implies that, for each
$t\in\mathbb{R}$, the above equation remains valid for a generic  open and dense subset of $({\rm
T}\mathbb{R}^n)^m$. Hence,
$$
(X_D^{(m)})_t^2u^i_k+(X_L^{(m)})_tu^i_k=F^i\left(t,u_k,(X_D^{(m)})_tu_k\right),\qquad i=1,\ldots,n,
$$
for every $t\in\mathbb{R}$. Additionally, as the above procedure is still valid for every
$k\in\mathbb{R}^{2n}$, every superposition rule provides us with a family of $2n$-parametrized solutions
$u_k(\cdot)=\Upsilon(\cdot;k)$ of the $t$-parametrized family of systems of PDEs (\ref{Char}).

Consider now a fundamental system $x_{(1)}(t),\ldots,x_{(m)}(t)$ and denote $p=p(0)$. For an arbitrary
$(x_0,v_0)\in {\rm T}_{x_0}\mathbb{R}^n$, the theorem of the existence and uniqueness of solutions for systems of
first-order differential equations shows that there exists a solution $x(t)$ of system (\ref{SODE}) with
initial conditions $x(0)=x_0$ and $dx/dt(0)=v_0$. In view of the properties of superposition rules, there
exists a single $k\in\mathbb{R}^{2n}$ such that $x(t)=\Upsilon\left(p(t);k\right)=u_k\left(p(t)\right)$.
Consequently, in view of expression (\ref{eq1}), one has
\begin{equation*}
\left\{
 \begin{aligned}x^i_0&=u^i_k(p), \cr v^i_0&=\frac{du_k^i(p(t))}{dt}\bigg|_{t=0}=[(X_D^{(m)})_0u_k^i](p),\end{aligned} \qquad i=1,\ldots,n.\right.
\end{equation*}
In other words, for a generic $p\in({\rm T}\mathbb{R}^n)^m$, there exists a single $k\in\mathbb{R}^{2n}$ such
that $\varphi(p,k)=(x_0,v_0)$. It follows that $\varphi_p$ is a bijection that concludes the
``if'' part of our demonstration.

Let us now prove that a map $\Upsilon:({\rm T}\mathbb{R}^n)^m\times\mathbb{R}^{2n}\rightarrow\mathbb{R}^n$
satisfying conditions $(i)$ and $(ii)$ is a superposition rule for (\ref{SODE}). Consider any solution $x(t)$
of (\ref{SODE}). Given a generic family of $m$ particular solutions $x_{(1)}(t),\ldots,x_{(m)}(t)$ of
(\ref{SODE}), condition $(ii)$ ensures that there exists a unique $k\in\mathbb{R}^{2n}$ such that
$\varphi(p(0),k)=\left(x(0),dx/dt(0)\right)$, where we took again $p(t)$ to be of the form (\ref{Setting}). In
view of condition $(i)$, the function $u_k(\cdot)=\Upsilon(\cdot;k)$ is a solution for the family of systems
of PDEs (\ref{Char}). Defining now $\bar x(t)=u_k(p(t))$ and using that expressions (\ref{eq1}) and
(\ref{eq2}) are valid again, we get
\begin{equation}\label{SinSol}
\frac{d^2\bar{x}^i}{dt^2}(t)=F^i\left(t,\bar x(t),\frac{d\bar x}{dt}(t)\right),\qquad i=1,\ldots,n.
\end{equation}
That is, $\bar x(t)$ is a solution to (\ref{SODE}). Moreover, in view of condition $(ii)$ and formula
(\ref{eq1}),  $\bar x(0)=x(0)$ and $d\bar x/dt(0)=dx/dt(0)$. Consequently, $\bar x(t)$ and $x(t)$ are both
solutions of (\ref{SODE}) with the same initial conditions and they hence coincide. In summary, for every
solution $x(t)$ of system (\ref{SODE}) and a generic family of $m$ particular solutions, there exists a unique
$k\in\mathbb{R}^{2n}$ such that $x(t)=\Upsilon(p(t);k)$, so $\Upsilon$ is a superposition rule.
\end{proof}

Roughly speaking, Theorem \ref{MT} states that the existence of a superposition rule for a system  of SODEs
(\ref{SODE}) is determined by the existence of an `appropriate' $2n$-parametric family of particular solutions
of the family of systems of PDEs (\ref{Char}). The interest of this result is obvious: it characterizes not
only the existence of superposition rules for systems of SODEs, but also provides us with a tool, namely the
family of systems (\ref{Char}), to determine them.

\begin{Note} Note that $X_D$ and $X_L$ are properly defined $t$-dependent vector fields over $({\rm T}\mathbb{R}^{n})^m$
and they maintain the form (\ref{DefVec}) for every coordinate system on $({\rm T}\mathbb{R}^n)^m$ induced by a coordinate system on $\mathbb{R}^n$.
\end{Note}

\begin{Note} Denote by $S_{ij}$ a permutation of variables $x_{(i)}\leftrightarrow x_{(j)}$, with $i,j=1,\ldots,m$.
As (\ref{Char}) and (\ref{Fun}) are invariant under such permutations, it can be easily inferred that, if $\Upsilon$
is a superposition rule for (\ref{SODE}), then $S_{ij}\Upsilon$ is also, which provides an analogue for systems of SODEs
of a known result about standard superposition rules \cite{CGM07}.
\end{Note}

Apart from the main result of Theorem \ref{MT}, a careful analysis of its proof suggests us new types of
superposition rules for systems of SODEs generalising previous notions used in the study of first-order
differential equations \cite{CGM07,In72}. Indeed, given a particular solution of the family of systems of PDEs
(\ref{Char}) and a family of particular solutions $x_{(1)}(t),\ldots,x_{(m)}(t)$ of system (\ref{SODE}), we
can define
$$
\bar x(t)=u_k\left(x_{(1)}(t),\frac{dx_{(1)}}{dt}(t),\ldots,x_{(m)}(t),\frac{dx_{(m)}}{dt}(t)\right).
$$
The above expression has the same form as (\ref{NewSol}). Following the calculations carried out in the ``if''
part of Theorem \ref{MT}, we obtain that the first and the second derivative of the above curve satisfy relations
(\ref{eq1}) and (\ref{eq2}). From here, as $u_k$ is a solution of (\ref{Char}), it follows that $\bar x(t)$ is
a new solution of (\ref{SODE}). In other words, a particular solution of the systems of PDEs (\ref{Char})
allows us to generate new solutions of system (\ref{SODE}) from any set of $m$ particular solutions for this
same system. This fact enables us to define a new type of superposition rule for systems of SODEs as follows.

\begin{definition} A {\it partial superposition rule} for a system of SODEs (\ref{SODE})
is a mapping  $\mathcal{P}:({\rm T}\mathbb{R}^n)^m\times\mathbb{R}^{p}\rightarrow \mathbb{R}^n$, with $p<2n$, such that
\begin{itemize}
 \item For a generic set $x_{(1)}(t),\ldots,x_{(m)}(t)$ of  particular solutions of system (\ref{SODE}),
$$
\bar
x(t)=\mathcal{P}\left(x_{(1)}(t),\frac{dx_{(1)}}{dt}(t),\ldots,x_{(m)}(t),\frac{dx_{(m)}}{dt}(t);k_1,\ldots,k_p\right)
$$
is a new solution of (\ref{SODE}).
\item For a generic $p\in ({\rm T}\mathbb{R}^n)^m$, the map $\mathcal{P}_p:\bar k\in\mathbb{R}^{p}\mapsto \mathcal{P}(p;\bar k)\in\mathbb{R}^n$
is an immersion.
\end{itemize}
\end{definition}
Obviously, for every fixed $\bar k=(k_1,\ldots,k_p)$, the map $u_{\bar k}(\cdot)=\mathcal{P}(\,\cdot\,;\bar
k)$ is a solution of the system (\ref{Char}). In view of this, it is easy to generalize Theorem \ref{MT} in
order to characterize systems of SODEs admitting partial superposition rules. Moreover, the above notion
extends to systems of SODEs the notion of partial superposition rule for systems of first-order differential
equations defined in \cite{CGM07}.

Let us now illustrate how the above results and definitions work. Consider the SODE
\begin{equation}\label{lesssimple}
\frac{d^2x}{dt^2}=t^2
\end{equation}
and look for a superposition rule depending on a single particular solution. Following the terminology used in
Theorem \ref{MT}, we have $m=1$ (one particular solution) and $n=1$ (system defined on $\mathbb{R}$).
Consequently, the corresponding family of systems of PDEs (\ref{Char}) reads
\begin{equation}\label{fam2}
v^2\frac{\partial^2 u}{\partial x_{(1)}^2}+2v_{(1)}t^2\frac{\partial^2 u}{\partial x_{(1)}\partial
v_{(1)}}+t^4\frac{\partial^2 u}{\partial v_{(1)}^2}+t^2\frac{\partial u} {\partial x_{(1)}}+2t \frac{\partial
u}{\partial v_{(1)}}=t^2,
\end{equation}
whose common solutions, which do not depend on $t$, are solutions of the system
$$
\frac{\partial^2 u}{\partial v_{(1)}^2}=0,\qquad \frac{\partial u}{\partial v_{(1)}}=0,\qquad
2v_{(1)}\frac{\partial^2 u}{\partial x_{(1)}\partial v_{(1)}}+\frac{\partial u} {\partial x_{(1)}}=1,\qquad
\frac{\partial ^2u}{\partial x_{(1)}^2}=0.
$$
The solutions of the above system are of the form $u=x_{(1)}+k_1$, with $k_1$ being an arbitrary constant.
Obviously, the family of systems (\ref{fam2}) does not give rise to a two parametric family of solutions and
(\ref{lesssimple}) does not admit any superposition rule in terms of one particular solution. Nevertheless, it
is interesting to point out that the solutions $u=x_{(1)}+k_1$ exemplify that, for every particular solution
$x_{(1)}(t)$ of (\ref{fam2}), the new function $u(x_{(1)}(t))=x_{(1)}(t)+k_1$ is a new solution of the system
that gives rise to a partial superposition rule $\mathcal{P}:(x_{(1)},v_{(1)};k_1)\in{\rm
T}\mathbb{R}\times\mathbb{R}\mapsto x_{(1)}+k_1\in\mathbb{R}$ for equation (\ref{lesssimple}).

Let us now turn to determining all possible superposition rules for (\ref{lesssimple}) involving two particular
solutions.  So, we have $m=2,$ $n=1$, and  the family (\ref{Char}) reads
\begin{equation*}
\begin{aligned}\sum_{a,b=1}^2\!\left(v_{(a)}v_{(b)}\frac{\partial^2 u}{\partial x_{(a)}\partial
x_{(b)}}\!+2v_{(a)}t^2\frac{\partial^2 u}{\partial x_{(a)}\partial v_{(b)}}\!+t^4\frac{\partial^2 u}{\partial
v_{(a)}\partial v_{(b)} }\right)\!+\!\sum_{a=1}^2\!\left(t^2\frac{\partial u}{\partial x_{(a)}}\!+\!2t\frac{\partial
u}{\partial v_{(a)}}\!\right)\!=\!t^2.
\end{aligned}
\end{equation*}
Proceeding as before, we obtain that solutions of this $t$-parametrized family of PDEs are solutions of the
system
\begin{eqnarray*}
\sum_{a=1}^2\frac{\partial
u}{\partial v_{(a)}}=0,\quad \sum_{a,b=1}^2\frac{\partial^2 u}{\partial v_{(a)}\partial
v_{(b)}}=0, \quad \sum_{a,b=1}^2v_{(a)}v_{(b)}\frac{\partial^2 u}{\partial x_{(a)}\partial
x_{(b)}}=0,\\ \sum_{a,b=1}^22v_{(a)}\frac{\partial^2 u}{\partial x_{(a)}\partial v_{(b)}}+\sum_{a=1}^2\frac{\partial u}{\partial x_{(a)}}=1.
\end{eqnarray*}
Plugging the first equation of the above system into the others, we obtain that the above system is equivalent
to
\begin{equation*}
\frac{\partial u}{\partial v_{(1)}}=-\frac{\partial u}{\partial v_{(2)}},\quad \frac{\partial u}{\partial
x_{(1)}}=1-\frac{\partial u}{\partial x_{(2)}},\quad (v_{(1)}-v_{(2)})^2\frac{\partial^2 u}{\partial
x_{(1)}\partial x_{(2)}}=0,
\end{equation*}
whose solutions take the form $u=(x_{(1)}-x_{(2)})f_1+x_{(1)}+f_2$,
with  $f_1$ and $f_2$ being two arbitrary functions depending on $v_{(1)}-v_{(2)}$. Now, by choosing
appropriate one-parametric families of solutions of the above form, we can get partial superposition rules.
For example, setting $f_1=k$ and $f_2=0$, with $k\in\mathbb{R}$, we obtain the family of solutions
$u_k=(x_{(1)}-x_{(2)})k+x_{(1)}$ which results in the partial superposition rule
$\mathcal{P}(x_{(1)},x_{(2)};k)=(x_{(1)}-x_{(2)})k+x_{(1)}$, which generates new particular solutions out of two
known ones and one constant. Moreover, Theorem \ref{MT} shows that the determination of a superposition rule
for system (\ref{lesssimple}) amounts us to obtaining a two-parametric family of solutions $u_{(k_1,k_2)}$ of the
above form such that condition $(ii)$ of Theorem \ref{MT} holds. This can be done in several ways. For
instance, by setting $f_1=k_1$ and $f_2=k_2$, with $k_1,k_2\in\mathbb{R}$, we obtain the superposition rule
$$
\Upsilon(x_1,v_1,x_2,v_2;k_1,k_2)=u_{(k_1,k_2)}(x_{(1)},x_{(2)})=k_1(x_{(1)}-x_{(2)})+x_{(1)}+k_2,
$$
and if we choose $f_1=k_1(v_{(1)}-v_{(2)})$ and $f_2=k_2(v_{(1)}-v_{(2)})$, we arrive to
$$
\Upsilon(x_1,v_1,x_2,v_2;k_1,k_2)=u_{(k_1,k_2)}(x_{(1)},x_{(2)})=k_1(v_{(1)}-v_{(2)})(x_{(1)}-x_{(2)})+x_{(1)}+k_2(v_{(1)}-v_{(2)}).
$$

Using our methods, we can easily derive the results of Table \ref{tab:template}. Special attention must be
paid to the second example, illustrating that partial superposition rules may exist when superposition rules
depending on the same number of particular solutions do not. In addition, this particular example is not a
SODE Lie system (see definition in the next section), which was almost the only tool to study superposition rules
for systems of SODEs so far.

\begin{table}[h]
\centering
\begin{tabular}{| c | c | c | }
    \hline
    SODE & Superposition rule & Partial superposition rule\\[.05in] \hline
    $ \frac{d^2x}{dt^2}=0$ & $v_{(1)}(k_1x_{(1)}+k_2)$ & $x_{(1)}+k_1$ \\[.05in] \hline
    $\frac{d^2x}{dt^2}=t\left(\frac{dx}{dt}\right)^2$ & nonexistent& $x_{(1)}+k_1$ \\[.05in] \hline
$\frac{d^2x}{dt^2}=t^2\frac{dx}{dt}$ & $k_1x_{(1)}+k_2$ & $k_1x_{(1)}$ \\[.05in] \hline
   \end{tabular}
\caption{Superposition and partial superposition rules depending on a particular solution.} \label{tab:template}
\end{table}

Theorem \ref{MT} characterizes systems of SODEs possessing a base-superposition rule as follows.
\begin{corollary} A mapping $\Upsilon:(\mathbb{R}^n)^m\times\mathbb{R}^{2n}\rightarrow\mathbb{R}^n$ is a
base-superposition rule for a system of SODEs (\ref{SODE}) if and only if the following two conditions hold:
\begin{enumerate}
 \item The functions $u_k:p\in\mathbb{R}^{nm}\mapsto u_k(p)=\Upsilon(p;k)\in\mathbb{R}^n$, with $k\in\mathbb{R}^{2n}$,
 are solutions of the $t$-parametrized family of systems of PDEs on $({\rm T}\mathbb{R}^n)^m$ given by
\begin{equation*}
({X}^{(m)}_1)_t^2u^i_k+(X^{(m)}_2)_tu^i_k=F^i(t,u,(X^{(m)}_1)_tu_k),\qquad i=1,\ldots,n,
\end{equation*}
where $X^{(m)}_1, X^{(m)}_2$ are the diagonal prolongations to $({\rm T}\mathbb{R}^n)^m$ of the time-dependent
vector fields
$$
X_1=\sum_{i=1}^nv^i\frac{\partial}{\partial x^i}, \qquad X_2=\sum_{i=1}^nF^i(t,x,v)\frac{\partial}{\partial
x^i}.
$$
\item The map $\varphi:({\rm T}\mathbb{R}^n)^m\times\mathbb{R}^{2n}\rightarrow{\rm T}\mathbb{R}^n$ of the form
$$
\varphi(p;k)=(u_k(p),[(X_1^{(m)})_0u_k](p))\in {\rm T}_{u_k(p)}\mathbb{R}^n
$$
gives rise to a family of bijections $\varphi_p:k\in\mathbb{R}^{2n}\mapsto \varphi(p;k)\in{\rm
T}\mathbb{R}^{n}$, with $p$ being a generic point of $({\rm T}\mathbb{R}^n)^m$.
\end{enumerate}
\end{corollary}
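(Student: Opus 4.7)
The plan is to deduce this corollary as a direct specialization of Theorem \ref{MT} to the case in which the map $\Upsilon$ factors through the base $(\mathbb{R}^n)^m$, i.e. $\partial u_k/\partial v_{(a)}^i \equiv 0$ for every $i,a$. Under this restriction, the two time-dependent vector fields appearing in Theorem \ref{MT} simplify considerably, because the ``$F$-part'' of $X_D$ and the whole of $X_L$ differentiate exclusively along velocity directions and therefore annihilate any function that is independent of the $v_{(a)}^i$. The first step of the plan is to formalize this observation and to rewrite (\ref{Char}) and (\ref{Fun}) accordingly.

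More concretely, I decompose $X_D = X_1 + Y$ with $Y=\sum_i F^i(t,x,v)\partial/\partial v^i$, so that both decomposition and diagonal prolongation commute. Applied to a $v$-independent $u_k$, I get immediately $(X_D^{(m)})_t u_k = (X_1^{(m)})_t u_k$ and $(X_L^{(m)})_t u_k = 0$. For the second derivative, the plan is to compute $(X_D^{(m)})_t^2 u_k = (X_1^{(m)})_t(X_1^{(m)})_t u_k + Y^{(m)}(X_1^{(m)})_t u_k$: the crucial point is that $Y^{(m)}$ acting on $(X_1^{(m)})_t u_k = \sum_{a,j} v_{(a)}^j (\partial u_k/\partial x_{(a)}^j)$ replaces each $v_{(a)}^j$ by $F^j(t,x_{(a)},v_{(a)})$, which is exactly the definition of $(X_2^{(m)})_t u_k$. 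Substituting these identities into equation~(\ref{Char}) yields precisely the PDE system stated in condition~(i) of the corollary, and substituting $(X_D^{(m)})_0 u_k=(X_1^{(m)})_0 u_k$ into (\ref{Fun}) yields the $\varphi$ required in condition~(ii).

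With this preparation both implications are short. For the ``only if'' direction, any base-superposition rule is a fortiori a superposition rule in the sense of Definition~\ref{Def1}, so Theorem \ref{MT} furnishes (\ref{Char}) and the bijection (\ref{Fun}), which the computation above reduces to conditions (i) and (ii). For the ``if'' direction, conditions (i) and (ii) together with the same identities show that the $v$-independent map $\Upsilon$ satisfies the hypotheses of Theorem \ref{MT}, hence is a superposition rule; since it is defined on $(\mathbb{R}^n)^m\times\mathbb{R}^{2n}$, it is a base-superposition rule by definition.

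The main obstacle I anticipate is purely bookkeeping: one must verify carefully that the mixed $x$--$v$ and pure $v$--$v$ second partial derivatives of $u_k$, which appear in the long expansion of $(X_D^{(m)})_t^2 u_k$ in the proof of Theorem \ref{MT}, really do all drop out when $u_k$ is assumed $v$-independent, leaving behind only $(X_1^{(m)})_t^2 u_k$ and the $(X_2^{(m)})_t u_k$ term produced by $Y^{(m)}$ acting on $X_1^{(m)} u_k$. This is straightforward but needs to be written out in coordinates to avoid confusing the role of the two occurrences of the index $a$ in the double prolongation; once done, no further difficulty remains.
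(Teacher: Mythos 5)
Your proposal is correct and follows exactly the route the paper intends: the paper states this result as an immediate specialization of Theorem \ref{MT} to velocity-independent $\Upsilon$, and your computation (splitting $X_D=X_1+Y$, noting $Y^{(m)}u_k=X_L^{(m)}u_k=0$, and identifying $Y^{(m)}(X_1^{(m)}u_k)$ with $(X_2^{(m)})_tu_k$) supplies precisely the bookkeeping the paper leaves implicit. No gaps.
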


\section{Superposition rules and SODE Lie systems}
\setcounter{equation}{0}
Recently, the theory of Lie systems was employed to obtain a few results about superposition rules for systems
of SODEs \cite{RSW97}--\cite{SIGMA}. All these achievements were based on the notion of a {\it SODE Lie system}.
We now describe this concept and provide several new results about the use of Lie systems to analyse
different types of superposition rules for systems of SODEs.

\begin{definition}\label{DefiSODE} A system of second-order ordinary differential equations (\ref{SODE})
is a {\it SODE Lie system} if the first-order system
\begin{equation}\label{FirstOrd}
\left\{\begin{aligned}\frac{dx^i}{dt}&=v^i,\cr
\frac{dv^i}{dt}&=F^i(t,x,v),\end{aligned}\right. \qquad i=1,\ldots,n,
\end{equation}
obtained from (\ref{SODE}) by adding the variables $v^i\equiv dx^i/dt$, $i=1,\ldots,n$, is a Lie
system.
\end{definition}

The Lie--Scheffers theorem is an effective tool to determine whether a system (\ref{SODE}) is a SODE Lie system or not.
Nevertheless, this method is based on analysing properties of the time-dependent vector field associated with
the corresponding system (\ref{FirstOrd}) and it does not provide any straightforward information about the
superposition rules for these systems. In order to overcome this drawback, we provide the following
characterization of SODE Lie systems in terms of properties of superposition rules.

\begin{proposition}\label{SODEsSup} A system of SODEs (\ref{SODE}) is a SODE Lie system if and only if it admits
a superposition rule $\Upsilon:({\rm T}\mathbb{R}^n)^m\times\mathbb{R}^{2n}\rightarrow\mathbb{R}^n$ such that $X_L^{(m)}\Upsilon=0$,
where $X_L$ is given by (\ref{DefVec}).
\end{proposition}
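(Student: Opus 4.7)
The plan is to exploit the Lie--Scheffers theorem applied to the first-order system (\ref{FirstOrd}) on $\mathrm{T}\mathbb{R}^n$, noting that the time-dependent vector field governing (\ref{FirstOrd}) is exactly $X_D$ from (\ref{DefVec}). Thus the SODE is a SODE Lie system iff $X_D$ admits a standard superposition rule $\Phi:(\mathrm{T}\mathbb{R}^n)^m\times\mathbb{R}^{2n}\to\mathrm{T}\mathbb{R}^n$. The bridge between such a $\Phi$ and our second-order superposition rule $\Upsilon$ is the identity
\begin{equation*}
\partial_t\bigl[(X_D^{(m)})_t u\bigr]=(X_L^{(m)})_t u
\end{equation*}
valid for any $t$-independent function $u$ on $(\mathrm{T}\mathbb{R}^n)^m$, which is immediate from the definitions (\ref{DefVec}) since $X_L=\partial_tX_D$. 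This identity converts the time-independence of the velocity component of $\Phi$ into the vanishing $X_L^{(m)}\Upsilon=0$.

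For the sufficiency direction, I would assume $\Upsilon$ is a superposition rule with $X_L^{(m)}\Upsilon=0$. By the identity above, $(X_D^{(m)})_t\Upsilon$ is then independent of $t$, so the map $\varphi$ of (\ref{Fun}) can be written as $(u_k(p),(X_D^{(m)})_t u_k(p))$ for every $t$. I would then check that $\varphi$ is a standard superposition rule for (\ref{FirstOrd}): the bijection property is condition (ii) of Theorem \ref{MT}; and for any solution $(x(t),v(t))$ of (\ref{FirstOrd}), choosing $k$ so that $\varphi(p(0),k)=(x(0),v(0))$ forces, via the computations in the proof of Theorem \ref{MT}, $\bar x(t):=u_k(p(t))$ to be the unique solution of the SODE through these initial data, hence $\varphi(p(t),k)=(\bar x(t),d\bar x(t)/dt)=(x(t),v(t))$. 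The Lie--Scheffers theorem then tells us that $X_D$ is a Lie system, i.e.\ (\ref{SODE}) is a SODE Lie system.

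For the converse, assume (\ref{SODE}) is a SODE Lie system, take a superposition rule $\Phi:(\mathrm{T}\mathbb{R}^n)^m\times\mathbb{R}^{2n}\to\mathrm{T}\mathbb{R}^n$ for (\ref{FirstOrd}) granted by Lie--Scheffers, and split it as $\Phi=(\Upsilon,\Phi^v)$. Since solutions of (\ref{FirstOrd}) are exactly the lifts $t\mapsto(x(t),dx(t)/dt)$ of solutions of the SODE, the component $\Upsilon$ inherits the structure of a superposition rule for the SODE (the bijection condition in Theorem \ref{MT} follows from the bijectivity of $\Phi_p$). Theorem \ref{MT} gives $dx(t)/dt=\bigl((X_D^{(m)})_t\Upsilon\bigr)(p(t),k)$, while the superposition rule for (\ref{FirstOrd}) gives $dx(t)/dt=\Phi^v(p(t),k)$. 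Equating the two and using that $p(t)$ sweeps a dense subset while $\Phi^v$ is $t$-independent forces $(X_D^{(m)})_t\Upsilon$ to be independent of $t$; the identity from the first paragraph then yields $X_L^{(m)}\Upsilon=0$.

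The main technical point to watch is the transfer of the genericity condition between the SODE and its lift (\ref{FirstOrd}), together with a careful use of Theorem \ref{MT} to identify $dx/dt$ with $(X_D^{(m)})_t\Upsilon$ along solutions; everything else is routine. Essentially, the proposition reduces to the observation that $X_L^{(m)}\Upsilon=0$ is precisely the condition under which a second-order superposition rule assembles into a \emph{time-independent} first-order superposition rule of the lifted system on $\mathrm{T}\mathbb{R}^n$.
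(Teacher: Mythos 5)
Your proposal is correct and follows essentially the same route as the paper: the paper's auxiliary function $\widehat\Upsilon$ in (\ref{ex4}) is exactly your $(X_D^{(m)})_t\Upsilon$, and its computation $\partial\widehat\Upsilon/\partial t=X_L^{(m)}\Upsilon$ is your identity $X_L=\partial_tX_D$, so in both arguments the condition $X_L^{(m)}\Upsilon=0$ is precisely what makes the lifted pair $(\Upsilon,(X_D^{(m)})_t\Upsilon)$ a time-independent superposition rule for (\ref{FirstOrd}), and conversely. The only cosmetic difference is that you invoke the Lie--Scheffers theorem and Theorem \ref{MT} explicitly where the paper redoes the differentiations by hand.
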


\begin{proof}
Consider a second-order system of the form (\ref{SODE}) admitting a superposition rule (\ref{SupSODE1}). The
general solution $x(t)$ of this system can be put in the form (\ref{SupSODE}). Differentiating this expression
with respect to the time, we obtain
\begin{equation}\label{ex1}
 \frac{dx^i}{dt}(t)=\sum_{a=1}^m\sum_{j=1}^n\frac{dx^j_{(a)}}{dt}(t)\frac{\partial\Upsilon^i}{\partial
x^j_{(a)}}(p(t))+\sum_{a=1}^m\sum_{j=1}^nF^j\left(t,x_{(a)}(t),\frac{dx_{(a)}}{dt}(t)\right)\frac{\partial\Upsilon^i}{\partial
v^j_{(a)}}(p(t)),
\end{equation}
where $i=1,\ldots,n$ and $p(t)$ is given by (\ref{Setting}). Therefore, by defining
\begin{equation}\label{ex4}
 \widehat\Upsilon^i(t,x_{(1)},v_{(1)},\ldots,x_{(m)},v_{(m)})=\sum_{a=1}^m\sum_{j=1}^nv^j_{(a)}\frac{\partial\Upsilon^i}{\partial
x^j_{(a)}}+\sum_{a=1}^m\sum_{j=1}^nF^j\left(t,x_{(a)},v_{(a)}\right)\frac{\partial\Upsilon^i}{\partial
v^j_{(a)}},
\end{equation}
expressions (\ref{SupSODE}) and (\ref{ex1}) can be brought into the form
\begin{equation}\label{sys}
\left\{\begin{aligned}x^i(t)&=\Upsilon^i\left(x_{(1)}(t),\frac{dx_{(1)}}{dt}(t),\ldots,x_{(m)}(t),\frac{dx_{(m)}}{dt}(t);k_1,\ldots,k_{2n}\right),\cr
\frac{dx^i}{dt}(t)&=\widehat\Upsilon^i\left(t,x_{(1)}(t),\frac{dx_{(1)}}{dt}(t),\ldots,x_{(m)}(t),\frac{dx_{(m)}}{dt}(t);k_1,\ldots,k_{2n}\right),\end{aligned}\right.\,\, i=1,\ldots,n.
\end{equation}
Taking into account that the general solution $(x(t),v(t))$ of the first-order system (\ref{FirstOrd}) is
obtained by adding the variables $v^i=dx^i/dt$ to system (\ref{SODE}), we see that expressions (\ref{sys})
define a map $\bar\Phi:(t,p;k)\in\mathbb{R}\times ({\rm T}\mathbb{R}^n)^m\times\mathbb{R}^{2n}\mapsto
(\Upsilon(p;k),\widehat\Upsilon(t,p;k))\in {\rm T}\mathbb{R}^n$ which allows us to write the general solution
of this first-order system in terms of a generic set of particular solutions $(x_{(a)}(t),v_{(a)}(t))$, with
$a=1,\ldots,m$. In view of expression (\ref{ex4}),
$$
\frac{\partial \widehat{\Upsilon}^i}{\partial
t}(t,x_{(1)},v_{(1)},\ldots,x_{(m)},v_{(m)})=\sum_{a=1}^m\sum_{j=1}^n\frac{\partial F^j}{\partial
t}\left(t,x_{(a)},v_{(a)}\right)\frac{\partial\Upsilon^i}{\partial v^j_{(a)}}=X^{(m)}_L\Upsilon^i.
$$
Therefore, if $X^{(m)}_L\Upsilon=0$, the mapping $\widehat\Upsilon$ and, in consequence, $\bar\Phi$ are
time-independent. This shows that the function $\bar \Phi$ is a superposition rule for system
(\ref{FirstOrd}), which is therefore a Lie system. Hence, system (\ref{SODE}) is a SODE Lie system.

Let us assume now that system (\ref{SODE}) is a SODE Lie system, i.e. the first-order system (\ref{FirstOrd})
is a Lie system and there exists a superposition rule $\Phi:(p;k)\in{\rm
T}\mathbb{R}^{nm}\times\mathbb{R}^{2n}\mapsto (\Upsilon(p;k),{\Phi}_v(p;k))\in{\rm T}\mathbb{R}^n$ such that
its general solution $(x(t),v(t))$ can be written as
\begin{equation*}
\left\{\begin{aligned}x(t)=\Upsilon(x_{(1)}(t),v_{(1)}(t),\ldots,x_{(m)}(t),v_{(m)}(t);k_1,\ldots,k_{2n}),\cr
v(t)={\Phi}_v(x_{(1)}(t),v_{(1)}(t),\ldots,x_{(m)}(t),v_{(m)}(t);k_1,\ldots,k_{2n}),\end{aligned}\right.
\end{equation*}
where $(x_{(a)}(t),v_{(a)}(t))$, with $a=1,\ldots,m$, is a generic family of particular solutions of system
(\ref{FirstOrd}). Since $dx_{(a)}(t)/dt=v_{(a)}(t)$, the function $\Upsilon:{\rm
T}\mathbb{R}^{nm}\times\mathbb{R}^{2n}\rightarrow\mathbb{R}^n$ enables us to write the general solution $x(t)$
of system (\ref{SODE}) in the form
$$
x(t)=\Upsilon\left(x_{(1)}(t),\frac{dx_{(1)}}{dt}(t),\ldots,x_{(m)}(t),\frac{dx_{(m)}}{dt}(t);k_1,\ldots,k_{2n}\right),\\
$$
i.e. in terms of a generic family of its particular solutions $x_{(1)}(t),\ldots,x_{(m)}(t)$, their
derivatives, and a set of $2n$-constants. In other words, our system of SODEs admits a superposition rule.
Consequently, differentiating the above expression with respect to $t$, we obtain, in virtue of
(\ref{ex4}), that $\bar{\Phi}_v\left(p(t);k\right)=
\widehat\Upsilon\left(t,p(t);k\right)$ for a generic $p(t)$, which is given by (\ref{Setting}) and constructed from a family of particular solutions $x_{(1)}(t),\ldots,x_{(m)}(t)$. Hence,
$\bar{\Phi}_v\left(x_{(1)},v_{(1)},\ldots,x_{(m)},v_{(m)};k\right)=\widehat\Upsilon\left(t,x_{(1)},v_{(1)},\ldots,x_{(m)},v_{(m)};k\right)$ and
$$
\frac{\partial\bar{\Phi}^i_v}{\partial
t}\left(x_{(1)},v_{(1)},\ldots,x_{(m)},v_{(m)};k\right)=\frac{\partial\widehat{\Upsilon}^i}{\partial
t}\left(t,x_{(1)},v_{(1)},\ldots,x_{(m)},v_{(m)};k\right)=X_L^{(m)}\Upsilon^i=0,
$$
for $i=1,\ldots,n$.
\end{proof}

The above proposition improves the results of \cite{CL10SecOrd2}, where it is only stated that SODE Lie
systems admit superposition rules. Indeed, our new result also supplies additional information about such
superposition rules, namely, $X^{(m)}_L\Upsilon=0$. This property is going to be used next to retrieve easily
some previous results found in \cite{CL10SecOrd2} and various new ones.

\begin{proposition}\label{QuasiBase} Every system of SODEs (\ref{SODE}) admitting a quasi-base-superposition rule is a SODE Lie system.
\end{proposition}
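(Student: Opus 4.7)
The strategy is to apply Proposition \ref{SODEsSup}, so the task reduces to producing, from a quasi-base superposition rule $G$, a superposition rule $\Upsilon$ (in the sense of Definition \ref{Def1}) satisfying $X_L^{(m)}\Upsilon=0$. First I would recast $G$ as such a $\Upsilon$: since each constant of motion $I_j$ is, by hypothesis, a time-independent function on $({\rm T}\mathbb{R}^n)^m$ whose value is fixed by the chosen family of particular solutions and their derivatives, the prescription
\[
\Upsilon(x_{(1)},v_{(1)},\ldots,x_{(m)},v_{(m)};k)=G\bigl(x_{(1)},\ldots,x_{(m)},I_1(x,v),\ldots,I_q(x,v);k\bigr)
\]
yields a map $\Upsilon:({\rm T}\mathbb{R}^n)^m\times\mathbb{R}^{2n}\to\mathbb{R}^n$ that writes the general solution of (\ref{SODE}) in the form (\ref{SupSODE}), hence a superposition rule for (\ref{SODE}).

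Next I would verify $X_L^{(m)}\Upsilon=0$. The crucial observation is that $\Upsilon$ depends on the velocity coordinates $v^i_{(a)}$ exclusively through the $I_j$'s, while $X_L^{(m)}$ involves only $\partial/\partial v^i_{(a)}$. The chain rule therefore gives
\[
X_L^{(m)}\Upsilon^i=\sum_{j=1}^{q}\frac{\partial G^i}{\partial I_j}\,X_L^{(m)}I_j,
\]
so everything boils down to showing $X_L^{(m)}I_j=0$ for each $j$.

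For this last step, I would exploit the relation between $X_L$ and $X_D$ read off directly from (\ref{DefVec}): a term-by-term comparison yields $X_L=\partial_t X_D$, and consequently $X_L^{(m)}=\partial_t X_D^{(m)}$. Now, since $I_j$ is time-independent and constant along the curves $(x_{(a)}(t),v_{(a)}(t))$ coming from the first-order system (\ref{FirstOrd}), it satisfies $(X_D^{(m)})_t I_j=0$ for every $t\in\mathbb{R}$. Differentiating this identity with respect to $t$ gives $(X_L^{(m)})_t I_j=0$ for every $t$, which combined with the chain-rule computation above yields $X_L^{(m)}\Upsilon=0$. Proposition \ref{SODEsSup} then finishes the argument.

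The only delicate point is the identification of each $I_j$ as a genuine function on $({\rm T}\mathbb{R}^n)^m$ annihilated by the time-dependent family $\{(X_D^{(m)})_t\}_{t\in\mathbb{R}}$, rather than as a mere constant bound to a single family of particular solutions; this is exactly what the phrase \emph{time-independent constants of motion} in the definition of a quasi-base superposition rule encodes. Once this is granted, the rest of the proof is a brief chain-rule manipulation together with the elementary identity $X_L=\partial_t X_D$.
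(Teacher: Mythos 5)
Your proposal is correct and follows essentially the same route as the paper: recast $G$ as $\Upsilon(p;k)=G(x_{(1)},\ldots,x_{(m)},I_1(p),\ldots,I_q(p);k)$, show $X_D^{(m)}I_j=0$ from the constancy of the $I_j$ along generic families of solutions, and conclude $X_L^{(m)}\Upsilon=\sum_j(\partial G/\partial I_j)\,\partial_t(X_D^{(m)}I_j)=0$ so that Proposition \ref{SODEsSup} applies. The identity $X_L^{(m)}I_j=\partial_t(X_D^{(m)}I_j)$ you extract from (\ref{DefVec}) is exactly the final computation in the paper's proof, and you correctly flag the one delicate point (promoting constancy along generic solution curves to the functional identity $(X_D^{(m)})_tI_j=0$), which the paper handles the same way.
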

\begin{proof}
Assume that system (\ref{SODE}) admits a quasi-base-superposition rule. Let us now prove that this quasi-base
superposition rule gives rise to a superposition rule for (\ref{SODE}) such that $X_L^{(m)}\Upsilon=0$, which,
in view of Proposition \ref{SODEsSup}, proves that system (\ref{SODE}) is a SODE Lie system.

The general solution $x(t)$ of (\ref{SODE}) can be cast in the form (\ref{express1}). As the functions
$I_j:{\rm T}\mathbb{R}^{nm}\rightarrow \mathbb{R}$, with $j=1,\ldots,q$, are constant along the $m$-tuples
$(x_{(a)}(t),dx_{(a)}(t)/dt)$ obtained from $m$ particular solutions $x_{(a)}(t)$ of system (\ref{SODE}), i.e.
\begin{equation}\label{exp1}
I_j\left(x_{(1)}(t),\frac{dx_{(1)}}{dt}(t),\ldots,x_{(m)}(t),\frac{dx_{(m)}}{dt}(t)\right)={\rm const.,}
\end{equation}
we obtain
$$
\frac{dI_j}{dt}=\sum_{i=1}^n\sum_{a=1}^m\left[\frac{dx^i_{(a)}}{dt}(t)\frac{\partial I_j}{\partial
x^i_{(a)}}(p(t))+F^i\left(t,x_{(a)}(t),\frac{dx_{(a)}}{dt}(t)\right)\frac{\partial I_j}{\partial
v^i_{(a)}}(p(t))\right]=0,
$$
where $p(t)$ is given by expression (\ref{Setting}). The above holds for every generic family of particular
solutions. Then, $X^{(m)}_DI_j=0$, for $j=1,\ldots,q$. Substituting expression (\ref{exp1}) into
(\ref{express1}), it turns out that there exists a superposition rule  $\Upsilon:({\rm
T}\mathbb{R}^n)^m\times\mathbb{R}^{2n}\rightarrow \mathbb{R}^{n}$ for system (\ref{SODE}) of the form
$\Upsilon\left(p;k\right)=G(x_{(1)},\ldots,x_{(m)},I_1(p),\ldots,I_{q}(p);k)$, where
$p=(x_{(1)},v_{(1)},\ldots,x_{(m)},v_{(m)})$. Indeed, in view of the definition of $\Upsilon$ and the
properties of the quasi-base-superposition rule $G$,
$$x(t)=G(x_{(1)}(t),\ldots,x_{(m)}(t),I_1(p(t)),\ldots,I_{q}(p(t));k)=\Upsilon\left(p(t);k\right),$$
where $x_{(1)}(t),\ldots,x_{(m)}(t)$ is any generic family of particular solutions of system (\ref{SODE}).
Then,
$$
X_L^{(m)}\Upsilon=\sum_{j=1}^{q}\sum_{i=1}^n\sum_{a=1}^m\frac{\partial F^i}{\partial
t}\left(t,x_{(a)},\frac{dx_{(a)}}{dt}\right)\frac{\partial I_j}{\partial v^i_{(a)}}\frac{\partial G}{\partial
I_j}=\sum_{j=1}^{q}\frac{\partial}{\partial t}(X^{(m)}_DI_j)\frac{\partial G}{\partial I_j}=0.
$$
\end{proof}

\begin{corollary}\label{Free} Every system of SODEs admitting
 a base-superposition rule is a SODE Lie system.
\end{corollary}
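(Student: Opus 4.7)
The plan is essentially to observe that a base-superposition rule is the special case of a quasi-base superposition rule in which no constants of motion appear, and then invoke Proposition \ref{QuasiBase}.

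More precisely, given a base-superposition rule $\Upsilon:(\mathbb{R}^n)^m\times\mathbb{R}^{2n}\rightarrow\mathbb{R}^n$ for a system of SODEs, I would interpret it as a quasi-base superposition rule with $q=0$: the map $G:\mathbb{R}^{mn}\times\mathbb{R}^{0}\times\mathbb{R}^{2n}\rightarrow\mathbb{R}^n$ defined by $G(x_{(1)},\ldots,x_{(m)};k_1,\ldots,k_{2n})=\Upsilon(x_{(1)},\ldots,x_{(m)};k_1,\ldots,k_{2n})$ trivially expresses the general solution of (\ref{SODE}) in the form (\ref{express1}) without any $I_j$'s. Hence Proposition \ref{QuasiBase} applies directly and yields that the system is a SODE Lie system.

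There is no real obstacle here: the only thing to check is that the definition of a quasi-base superposition rule admits the degenerate case $q=0$, which it clearly does since the constants of motion $I_1,\ldots,I_q$ only serve to enrich the dependence of $G$ beyond the base variables $x_{(1)},\ldots,x_{(m)}$. Consequently, the corollary is an immediate specialization of the preceding proposition and requires no further argument.
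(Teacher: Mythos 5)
Your proposal is correct and matches the paper's (implicit) argument: the corollary is stated without proof precisely because a base-superposition rule is the $q=0$ case of a quasi-base superposition rule, so Proposition \ref{QuasiBase} applies verbatim. Nothing further is needed.
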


The implication of the above corollary cannot be reversed, i.e. not every SODE Lie system admits a
base-superposition rule. Indeed, the following results can be easily used to prove the existence of SODE Lie
systems admitting no base-superposition rule.

\begin{lemma}\label{CondNec} Given a system of SODEs (\ref{SODE}) admitting a base-superposition rule, the systems
\begin{equation}\label{FamSODE}
\frac{d^2x^i}{d\tau^2}=\frac{d\tau}{dt}\frac{d^2t}{d\tau^2}\frac{dx^i}{d\tau}+\left(\frac{dt}{d\tau}\right)^2F^i
\left(t(\tau),x,\frac{d\tau}{dt}\frac{dx}{d\tau}\right),\qquad
i=1,\ldots,n,
\end{equation}
with $t=t(\tau)$ being any time-reparametrisation, are SODE Lie systems admitting a common base-superposition
rule.
\end{lemma}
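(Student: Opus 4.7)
The plan is to show that the very base-superposition rule $\Upsilon$ of the original system (\ref{SODE}) is simultaneously a base-superposition rule for every system in the family (\ref{FamSODE}), and then to apply Corollary \ref{Free} to each member of this family. This will deliver both assertions at once: the existence of a common base-superposition rule and the SODE Lie system property.

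The starting observation is the standard chain-rule fact that $x(t)$ solves (\ref{SODE}) if and only if $y(\tau):=x(t(\tau))$ solves (\ref{FamSODE}); indeed, the coefficients appearing on the right-hand side of (\ref{FamSODE}) are precisely those produced by computing $d^2 y/d\tau^2$ from $dy/d\tau=(dt/d\tau)\,dx/dt$. Consequently, any fundamental family $x_{(1)}(t),\ldots,x_{(m)}(t)$ of particular solutions of (\ref{SODE}) yields, by composition with $t(\tau)$, a family $y_{(a)}(\tau):=x_{(a)}(t(\tau))$ of particular solutions of (\ref{FamSODE}), and this correspondence is a bijection that, because $t(\tau)$ is a diffeomorphism of $\mathbb{R}$, preserves the open and dense character of the admissible sets of initial data; hence fundamental families correspond to fundamental families. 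Substituting $t=t(\tau)$ into the identity $x(t)=\Upsilon(x_{(1)}(t),\ldots,x_{(m)}(t);k_1,\ldots,k_{2n})$ then gives
\[
y(\tau)=\Upsilon\bigl(y_{(1)}(\tau),\ldots,y_{(m)}(\tau);k_1,\ldots,k_{2n}\bigr),
\]
so $\Upsilon$ is a base-superposition rule for (\ref{FamSODE}), independent of the choice of reparametrisation $t(\tau)$.

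Having thus obtained a common base-superposition rule $\Upsilon$ for every system in the family (\ref{FamSODE}), I would close the argument by invoking Corollary \ref{Free}, which guarantees that a system of SODEs possessing a base-superposition rule is automatically a SODE Lie system; applying this corollary to each member of (\ref{FamSODE}) finishes the proof. I foresee no substantive obstacle: the only delicate point is the transport of the genericity condition along the reparametrisation, but since $t(\tau)$ is a smooth bijection with smooth inverse, this is automatic and does not require any further work.
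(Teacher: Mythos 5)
Your proposal is correct and follows essentially the same route as the paper: both arguments rest on the observation that a base-superposition rule involves only the base variables (no velocities), so composing solutions with the reparametrisation $t=t(\tau)$ leaves the formula $x(\tau)=\Upsilon(x_{(1)}(\tau),\ldots,x_{(m)}(\tau);k_1,\ldots,k_{2n})$ intact, after which Corollary \ref{Free} yields the SODE Lie system property for every member of the family. Your extra remark on transporting genericity through the diffeomorphism $t(\tau)$ is a small refinement the paper leaves implicit.
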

\begin{proof} A time-reparametrisation $t=t(\tau)$ maps system (\ref{SODE}) to a system of SODEs of the
form (\ref{FamSODE}) with the general solution $x(\tau)=x(t(\tau))$. If $\Upsilon:\mathbb{R}^{nm}\times\mathbb{R}^{2n}\rightarrow\mathbb{R}^n$
is a base-superposition rule for (\ref{SODE}),  then $
x(\tau)=\Upsilon(x_{(1)}(\tau),\ldots,x_{(m)}(\tau);k_1,\ldots,k_{2n}), $ where
$x_{(1)}(\tau),\ldots,x_{(m)}(\tau)$ is any generic family of particular solutions of (\ref{FamSODE}).
Consequently, all the second-order differential equations of the family (\ref{FamSODE}) admit a common
base-superposition rule and, according to Corollary \ref{Free}, all systems (\ref{FamSODE}) are SODE Lie
systems.
\end{proof}

Although only few SODE Lie systems admit them, base-superposition rules are the main superposition rules treated in the literature. The next proposition shows that SODE Lie systems must satisfy various restrictive
conditions to admit a base-superposition rule.

\begin{proposition}\label{IncField} Given a system of SODEs (\ref{SODE}) admitting a base-superposition rule,
the associated first-order system (\ref{FirstOrd}) is a Lie system related to a Vessiot--Guldberg Lie algebra
containing the Liouville vector field $\Delta_L$ of the tangent bundle ${\rm T}\mathbb{R}^n$ and the vector fields
\begin{equation}\label{IncFieldVec}
X^\lambda_p(x,v)=\sum_{i=1}^n \frac{d^p}{d\lambda^p} (\lambda^{2}F^i(t_1,x,\lambda^{-1} v))\frac{\partial}{\partial
v^i},\qquad \qquad p=1,2,\ldots,
\end{equation}
where $t_1\in\mathbb{R}$ and $\lambda\in\mathbb{R}^\times=\mathbb{R}-\{0\}$.
\end{proposition}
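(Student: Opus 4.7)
The plan is to combine Lemma \ref{CondNec}, which asserts that the whole family of reparametrized SODEs (\ref{FamSODE}) admits a \emph{common} base-superposition rule $\Upsilon$, with Proposition \ref{CSR} to produce a single finite-dimensional Vessiot--Guldberg Lie algebra $V$ containing every associated first-order time-dependent vector field. The two claimed families $X^\lambda_p$ and $\Delta_L$ can then be extracted by choosing suitable reparametrizations and differentiating in the parameter $\lambda$.

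Differentiating the identity $x(\tau) = \Upsilon(x_{(1)}(\tau),\ldots,x_{(m)}(\tau);k)$ with respect to $\tau$ yields an expression for $v^i(\tau) = dx^i/d\tau$ in terms of the variables $v^j_{(a)}(\tau) = dx^j_{(a)}/d\tau$ and the partial derivatives $\partial \Upsilon^i/\partial x^j_{(a)}$. Because $\Upsilon$ depends only on the base variables, this formula is manifestly independent of both $t(\tau)$ and $F$, so the resulting map $\Phi = (\Upsilon,\widehat\Upsilon)$ is a common superposition rule for the first-order Lie systems of every reparametrization in (\ref{FamSODE}). Proposition \ref{CSR} then produces the required common $V$ on $\mathrm{T}\mathbb{R}^n$.

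For the linear reparametrization $t(\tau) = \lambda\tau + t_1$, with $\lambda\in\mathbb{R}^\times$ and $t_1\in\mathbb{R}$, the $\alpha$ term of (\ref{FamSODE}) vanishes and the associated first-order vector field at $\tau=0$ becomes
$$Y^{\lambda,t_1}_0 = \sum_{i=1}^n v^i\,\frac{\partial}{\partial x^i} + \sum_{i=1}^n \lambda^2 F^i(t_1, x, \lambda^{-1}v)\,\frac{\partial}{\partial v^i} \in V.$$
Since $V$ is finite-dimensional, hence closed in the space of smooth vector fields, every $\lambda$-derivative of this smooth family remains in $V$; the first sum is $\lambda$-independent, so for $p\geq 1$ these derivatives coincide exactly with the $X^\lambda_p$ of (\ref{IncFieldVec}). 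To extract $\Delta_L$, I would use a nonlinear reparametrization such as $t(\tau) = t_1 + \tau + \tau^2/2$, for which $\beta(0) = 1$ and $\alpha(0) = 1$; evaluating its first-order field at $\tau=0$ gives
$$Y^{\mathrm{rep}}_0 = \sum_{i=1}^n v^i\,\frac{\partial}{\partial x^i} + \sum_{i=1}^n\bigl(v^i + F^i(t_1,x,v)\bigr)\,\frac{\partial}{\partial v^i} = Y^{1,t_1}_0 + \Delta_L,$$
and since both $Y^{\mathrm{rep}}_0$ and $Y^{1,t_1}_0$ lie in $V$, so does $\Delta_L$.

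The main conceptual obstacle is the first step, namely verifying that differentiating the base-superposition rule produces a genuinely reparametrization-independent superposition rule for all associated first-order systems, so that Proposition \ref{CSR} applies to the whole family at once. Once this common $\Phi$ delivers a common $V$, the remainder is a direct computation exploiting only closure of $V$ under subtraction and under differentiation in a real parameter; the nonlinearity of the reparametrization in the last step is essential because linear reparametrizations yield $\alpha\equiv 0$ and thus never contribute the $v^i\partial_{v^i}$ term needed to produce $\Delta_L$.
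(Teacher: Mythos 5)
Your proposal is correct and follows essentially the same route as the paper: Lemma \ref{CondNec} plus Proposition \ref{CSR} give a common finite-dimensional Vessiot--Guldberg Lie algebra $V$, linear reparametrisations $t(\tau)=\lambda\tau+t_1$ yield the curve $\lambda\mapsto Y^{\lambda,t_1}_0$ whose $\lambda$-derivatives produce the $X^\lambda_p$, and a reparametrisation with unit second derivative at the base point yields $\Delta_L$ by subtraction. The paper only prescribes the $2$-jet of the reparametrisations at $\tau_1$ rather than writing them explicitly, and leaves implicit the observation you spell out (that $\widehat\Upsilon$ built from a base-superposition rule is independent of $F$ and of the reparametrisation), but these are presentational differences, not mathematical ones.
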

\begin{proof} In view of Lemma \ref{CondNec}, time-reparametrisations $\tau=\tau(t)$, with inverses $t=t(\tau)$,
transform system (\ref{SODE}) into  the family of SODE Lie systems (\ref{FamSODE}), whose associated first-order systems
\begin{equation*}
\left\{\begin{aligned}\frac{dx^i}{d\tau}&=v^i,\\
\frac{dv^i}{d\tau}&=\frac{d\tau}{dt}\frac{d^2t}{d\tau^2}v^i+\left(\frac{dt}{d\tau}\right)^2F^i\left(t(\tau),x,\frac{d\tau}{dt}v\right),\end{aligned}\right.
\end{equation*}
with $i=1,\ldots,n$, admit a common base-superposition rule. According to Proposition \ref{CSR}, this implies that there exists a
finite-dimensional Lie algebra of vector fields $V$ containing all the vector fields
$$
X^{ t(\tau)}_\tau(x,v)=\sum_{i=1}^n\left( v^i\frac{\partial}{\partial
x^i}+\left(\frac{d\tau}{dt}\frac{d^2t}{d\tau^2}v^i+\left(\frac{dt}{d\tau}\right)^2F^i\left(t(\tau),x,
\frac{d\tau}{dt}v\right)\right)\frac{\partial}{\partial
v^i}\right),
$$
with $t=t(\tau)$ being any time-reparametrisation and $\tau\in\mathbb{R}$. In particular, if we take
$t(\tau)=\tau$, we obtain that
$$
X_\tau=\sum_{i=1}^n\left( v^i\frac{\partial}{\partial x^i}+F^i\left(\tau,x,v\right)\frac{\partial}{\partial
v^i}\right)\in V,\qquad \forall\tau\in\mathbb{R}.
$$
Now, if we take a time-reparametrisation $t=\bar t(\tau)$ such that
$$\bar t(\tau_1)=t_1,\qquad \,\,\frac{d\bar t}{d\tau}(\tau_1)=1,\qquad \,\, \frac{d^2\bar t}{d\tau^2}(\tau_1)=1,$$
it follows that $X^{\bar t(\tau)}_{\tau_1},X_{t_1}\in V$ and, consequently, $X^{\bar
t(\tau)}_{\tau_1}-X_{t_1}\in V$. Taking into account that
$$
X^{\bar t(\tau)}_{\tau_1}(x,v)=\sum_{i=1}^n\left( v^i\frac{\partial}{\partial
x^i}+\left(v^i+F^i\left(t_1,x,v\right)\right)\frac{\partial}{\partial v^i}\right),
$$
we have $X^{\bar t(\tau)}_{\tau_1}-X_{t_1}=\sum_{i=1}^nv^i\partial/\partial v^i=\Delta_L\in V$.

On the other hand, consider a family of parametrizations $t=t_{\lambda,t_1}(\tau)$, with
$\lambda\in\mathbb{R}^\times$ and $t_1\in\mathbb{R}$, satisfying the conditions
$$t_{\lambda,t_1}(\tau_1)=t_1,\qquad \,\,\frac{d t_{\lambda,t_1}}{d\tau}(\tau_1)=\lambda,\qquad \,\,\frac{d^2 t_{\lambda,t_1}}{d\tau^2}(\tau_1)=0.$$
Consequently, the family of vector fields
$$
X^\lambda_{\tau_1}(x,v)\equiv X^{t_{\lambda,t_1}(\tau_1)}_{\tau_1}(x,v)=\sum_{i=1}^n\left(
v^i\frac{\partial}{\partial x^i}+\lambda^{2}F^i\left(t_1,x,\lambda^{-1} v\right)\frac{\partial}{\partial
v^i}\right),\qquad \lambda\in\mathbb{R}^\times,
$$
is included in $V$. Note that, for every $t_1$, the above family of vector fields can be considered as a curve
in $V$. As $V$ is a vector space, all the derivatives of such a curve, that is, the vector fields
$$
\frac{d^p}{d\lambda^p}\left[X^\lambda_{\tau_1}(x,v)\right]=\sum_{i=1}^n \frac{d^p}{d\lambda^p}
(\lambda^{2}F^i(t_1,x,\lambda^{-1} v))\frac{\partial}{\partial v^i}=X^\lambda_p(x,v),
$$
are included in $V$.
\end{proof}

\section{Superposition rules and systems of HODEs}\label{HODEs}
\setcounter{equation}{0}
In order to introduce a general theory of superposition rules for systems of HODEs, let us recall some basic concepts of
the theory of higher-order tangent bundles \cite{LOS01}.

Given two curves $\rho,\sigma:\mathbb{R}\rightarrow\mathbb{R}^n$ such that
$\rho(0)=\sigma(0)=x_0\in\mathbb{R}^n$, we say that they have a {\it contact of order $s$ at $x_0$}, with
$s\in\mathbb{N}$,  if they satisfy
$$
\frac{d^j(f\circ\rho)}{dt^j}(0)=\frac{d^j(f\circ\sigma)}{dt^j}(0),\qquad j=1,\ldots,s,
$$
for every function $f\in C^{\infty}(\mathbb{R}^n)$. The relation `to have a contact of order $s$ at $x_0$' is
an equivalence relation. Each equivalence class, say ${\bf t}^s_{x_0}$, is called an {\it $s$-tangent vector}
at $x_0$. Now, we define ${\rm T}^s_{x_0}\mathbb{R}^n$ as the set of all $s$-tangent vectors at $x_0$ and we
put
$$
{\rm T}^s\mathbb{R}^n=\bigcup_{x_0\in\mathbb{R}^n}{\rm T}^s_{x_0}\mathbb{R}^n.
$$
It can be proved that $({\rm T}^s\mathbb{R}^n,\pi,\mathbb{R}^n)$, with $\pi:{\bf t}^s_{x_0}\in{\rm
T}^s\mathbb{R}^n\mapsto x_0\in\mathbb{R}^n$, can be endowed with a differential structure of fibre bundle. Let
us briefly analyse this fact.

Every global coordinate system $\{x^1,\ldots,x^n\}$ on $\mathbb{R}^n$ induces a natural coordinate system on
the space ${\rm T}^s\mathbb{R}^n$. Indeed, consider again a curve $\rho$. The $s$-tangent vector ${\bf
t}^s_{x_0}$ associated with this curve admits a representative
$$
\rho^i(0)+\frac{t}{1!}\frac{d\rho^i}{dt}(0)+\ldots+\frac{t^s}{s!}\frac{d^s\rho^i}{dt^s}(0),\qquad
i=1,\ldots,n,
$$
which can be characterized by its coefficients
$$
x_0^i=\rho^i(0),\quad y^{(1)i}_0=\frac{1}{1!}\frac{d\rho^i}{dt}(0),\quad \ldots,\quad
y^{(s)i}_0=\frac{1}{s!}\frac{d^s\rho^i}{dt^s}(0),\qquad i=1,\ldots,n.
$$
In consequence, the mapping $\varphi:{\bf t}^s_{x_0}\in{\rm T}^s\mathbb{R}^n\mapsto
(x^i_0,y^{(1)i}_0,\ldots,y^{(s)i}_0)\in \mathbb{R}^{(s+1)n}$ gives a canonical global coordinate for ${\rm
T}^s\mathbb{R}^n$. Obviously, the map $\pi$ becomes a smooth submersion which makes ${\rm T}^s\mathbb{R}^n$
into a fibre bundle with base $\mathbb{R}^n$. We hereby denote each element of ${\rm T}^s\mathbb{R}^n$ by
${\bf t}^s_x=(x,y^{(1)},\ldots,y^{(s)})$.

Now, given a curve $c:t\in \mathbb{R}\mapsto c(t)\in\mathbb{R}^n$, we call {\it prolongation to ${\rm
T}^s\mathbb{R}^n$ of $c$} the curve ${\bf t}^sc:t\in \mathbb{R} \mapsto {\bf t}^sc(t)\in {\rm
T}^s\mathbb{R}^n$, associating with every $t_0$ the corresponding equivalence class of $c(t-t_0)$, and given
in coordinates by
$$
{\bf t}^sc(t)=\left(x(t),\frac{1}{1!}\frac{dc}{dt}(t),\ldots,\frac{1}{s!}\frac{d^sc}{dt^s}(t)\right).
$$

\begin{definition}\label{Def1b} We say that a system of $s$-order ordinary differential equations on $\mathbb{R}^n$ given by
\begin{equation}\label{HODE}
\frac{d^sx^i}{dt^p}=F^i\left(t,x,\frac{dx}{dt},\ldots,\frac{d^{s-1}x}{dt^{s-1}}\right), \qquad i=1,\ldots,n,
\end{equation}
admits a {\it superposition rule} if there exists a map $\Upsilon:({\rm
T}^{s-1}(\mathbb{R}^{n}))^m\times\mathbb{R}^{sn}\rightarrow \mathbb{R}^n$ of the form
\begin{equation*}
x=\Upsilon({\bf t}^{s-1}_{x_{(1)}},\ldots,{\bf t}^{s-1}_{x_{(m)}};k_1,\ldots,k_{sn}),
\end{equation*}
such that the general solution $x(t)$ of system (\ref{HODE}) can be written as
\begin{equation}\label{SupHODE}
x(t)=\Upsilon\left({\bf t}^{s-1}{x_{(1)}}(t),\ldots,{\bf t}^{s-1}{x_{(m)}}(t);k_1,\ldots,k_{sn}\right),
\end{equation}
for any generic family of particular solutions $x_{(1)}(t),\ldots, x_{(m)}(t)$ of (\ref{HODE}) and
$k_1,\ldots,k_{sn}$ being a set of constants related to the initial conditions of each particular solution.
\end{definition}

\begin{Note} Observe that, according to the above definitions, we have ${\rm T}^1\mathbb{R}^n\simeq {\rm T}\mathbb{R}^n$
and ${\bf t}^1x_{(a)}(t)=\left(x_{(a)}(t),dx_{(a)}(t)/dt\right)$. Hence, Definition \ref{Def1} describing superposition
rules for systems of SODEs turns out to be a particular case of the above definition. Moreover,
if we put ${\rm T}^0\mathbb{R}^n=\mathbb{R}^n$,  Definition \ref{Def1b} reduces to the standard superposition rule notion.
\end{Note}

\begin{definition} We say that a system of ordinary differential equations (\ref{HODE}) of the order  $s$
is a {\it HODE Lie system} if the first-order system
\begin{equation}\label{FirstHOrd}
\left\{\begin{aligned}\frac{dx^i}{dt}&=y^{(1)i},\cr
\frac{dy^{(1)i}}{dt}&=y^{(2)i},\cr
\ldots&=\ldots,\cr
\frac{dy^{(s-1)i}}{dt}&=F^i\left(t,x,y^{(1)},\ldots,y^{(s-1)}\right),\end{aligned}\right.\qquad i=1,\ldots,n,
\end{equation}
obtained from (\ref{HODE}) by adding the new variables $y^{(j)i}= d^jx^i/dt^j$, with $i=1,\ldots,n$ and
$j=1,\ldots,s-1$, is a Lie system.
\end{definition}

Observe that the results and definitions described in previous sections can be directly generalized to systems
of HODEs. This is why, instead of detailing such generalisations, we shall merely describe a simple but
relevant result ensuring the existence of superposition rules for HODE Lie systems. In next sections, this
result is used to determine a superposition rule for second- and third-order Kummer--Schwarz equations.

\begin{proposition} Every HODE Lie system admits a superposition rule.
\end{proposition}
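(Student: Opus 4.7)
The plan is to reduce this to the Lie--Scheffers theorem applied to the associated first-order system, and then to recognize that particular solutions of that first-order system are nothing but prolongations of particular solutions of the original HODE.

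First, I would invoke the hypothesis: if (\ref{HODE}) is a HODE Lie system, then by definition the first-order system (\ref{FirstHOrd}) on $\mathbb{R}^{sn}$ with coordinates $(x,y^{(1)},\ldots,y^{(s-1)})$ is a Lie system. By the Lie--Scheffers theorem, it therefore admits a superposition rule $\Phi:(\mathbb{R}^{sn})^m\times\mathbb{R}^{sn}\rightarrow\mathbb{R}^{sn}$, allowing us to write its general solution
\[
(x(t),y^{(1)}(t),\ldots,y^{(s-1)}(t))=\Phi\bigl(\xi_{(1)}(t),\ldots,\xi_{(m)}(t);k_1,\ldots,k_{sn}\bigr),
\]
in terms of a generic family of particular solutions $\xi_{(a)}(t)=(x_{(a)}(t),y^{(1)}_{(a)}(t),\ldots,y^{(s-1)}_{(a)}(t))$ of (\ref{FirstHOrd}) and $sn$ arbitrary constants.

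Next I would use the tautological observation that every particular solution of (\ref{FirstHOrd}) is forced, by its very equations $dy^{(j)i}/dt=y^{(j+1)i}$, to satisfy $y^{(j)i}_{(a)}(t)=d^jx^i_{(a)}(t)/dt^j$ for $j=1,\ldots,s-1$; in other words, $\xi_{(a)}(t)$ is, up to the harmless factorial normalisation in the coordinates on ${\rm T}^{s-1}\mathbb{R}^n$, exactly the prolongation ${\bf t}^{s-1}x_{(a)}(t)$. Thus $\xi_{(a)}$ takes values in a submanifold of $\mathbb{R}^{sn}$ canonically identifiable with ${\rm T}^{s-1}\mathbb{R}^n$, and the map ${\bf t}^{s-1}x\mapsto \xi$ is a diffeomorphism intertwining the two descriptions.

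I would then define $\Upsilon$ as the composition of $\Phi$ with this identification followed by projection onto the first $n$ coordinates of $\mathbb{R}^{sn}$: explicitly,
\[
\Upsilon({\bf t}^{s-1}_{x_{(1)}},\ldots,{\bf t}^{s-1}_{x_{(m)}};k_1,\ldots,k_{sn})
={\rm pr}_x\,\Phi(\xi_{(1)},\ldots,\xi_{(m)};k_1,\ldots,k_{sn}).
\]
The conclusion (\ref{SupHODE}) of Definition \ref{Def1b} then follows immediately from the corresponding property of $\Phi$. Since a generic family of particular solutions of (\ref{HODE}) yields, via prolongation, a generic family of particular solutions of (\ref{FirstHOrd}), and conversely, the genericity condition is preserved.

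There is no real obstacle here; the only point requiring a little care is the bookkeeping between the coordinates $y^{(j)i}=d^jx^i/dt^j$ used in (\ref{FirstHOrd}) and the coordinates $y^{(j)i}=(1/j!)d^jx^i/dt^j$ used on ${\rm T}^{s-1}\mathbb{R}^n$, but this is just a diffeomorphic relabelling of the base and of the constants $k_1,\ldots,k_{sn}$ and so does not affect the existence of a superposition rule.
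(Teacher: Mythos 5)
Your proposal is correct and follows essentially the same route as the paper: it observes that solutions of the first-order system (\ref{FirstHOrd}) are precisely the prolongations ${\bf t}^{s-1}x(t)$ of solutions of (\ref{HODE}), takes the superposition rule $\Phi$ guaranteed for the associated Lie system, and projects onto the base to obtain $\Upsilon=\pi\circ\Phi$. The extra remarks on the factorial normalisation and on genericity are harmless elaborations of what the paper leaves implicit.
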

\begin{proof} Note that every solution of system (\ref{FirstHOrd}) is of the form ${\bf t}^{s-1}x_p(t)$
for a particular solution $x_p(t)$ of system (\ref{HODE}) and vice versa. Consequently, the superposition rule
$\Phi:({\rm T}^{s-1}\mathbb{R}^n)^m\times\mathbb{R}^{sn}\rightarrow {\rm T}^{s-1}\mathbb{R}^{n}$ for (\ref{FirstHOrd})
allows us to write that the general solution ${\rm t}^{s-1}x(t)$ of (\ref{FirstHOrd}) in the form
\begin{equation*}
{\bf t}^{s-1}x(t)=\Phi({\bf t}^{s-1}x_{(1)}(t),\ldots,{\bf t}^{s-1}x_{(m)}(t);k_1,\ldots,k_{sn}).
\end{equation*}
in terms of generic families of particular solutions $x_{(1)}(t),\ldots,x_{(m)}(t)$ of (\ref{HODE}), their
derivatives up to $s-1$ order, and the constants $k_1,\ldots,k_{sn}$. Applying the projection $\pi:{\rm
T}^{s-1}\mathbb{R}^n\rightarrow\mathbb{R}^n$ to both sides of the above relation, it follows that the general
solution $x(t)$ of system (\ref{HODE}) can be written as
\begin{equation*}
x(t)=(\pi\circ \Phi)({\bf t}^{s-1}x_{(1)}(t),\ldots,{\bf t}^{s-1}x_{(m)}(t);k_1,\ldots,k_{sn}).
\end{equation*}
In other words, $\Upsilon=\pi\circ\Phi:({\rm
T}^{s-1}\mathbb{R}^n)^m\times\mathbb{R}^{sn}\rightarrow\mathbb{R}^n$ is a superposition rule for
(\ref{HODE}).
\end{proof}

\section{Examples of superposition rules for systems of SODEs}\label{Examples}
\setcounter{equation}{0}
Let us illustrate now the results described in the previous sections by means of various examples extracted from
the physics and mathematics literature. As the first simple instance, consider the $n$-dimensional isotropic
harmonic oscillator
\begin{equation}\label{HO}
\frac{d^2x^{i}}{dt^2}=-\omega^2(t)x^i,\qquad i=1,\ldots,n,
\end{equation}
with a time-dependent frequency $\omega(t)$. This system appears broadly in the physics literature. For
instance, it occurs in the study of the fluctuations of the tachyon field obtained by using effective
Lagrangians \cite{ACT04,Kl04}, in the description of the movement of a particle on a heated spring
\cite{JS98}, in the analysis of the properties of diverse interesting nonlinear differential equations, like
Milne--Pinney equations, with many applications in physics \cite{CL08Diss,AL08,HG99}.

As shown in Section \ref{Justify}, systems (\ref{HO}) admit a base-superposition rule. Consequently,
Proposition \ref{Free} ensures that the systems of the form (\ref{HO}) must be SODE Lie systems. Actually,
this can be proved easily. In view of Definition \ref{DefiSODE}, demonstrating that each system of the form
(\ref{HO}) is a SODE Lie system reduces to proving that every first-order system
\begin{equation}\label{HarSysFirst}
\left\{\begin{aligned}\frac{dx^i}{dt}&=v^i,\\
\frac{dv^i}{dt}&=-\omega^2(t)x^i,\end{aligned}\right.\qquad i=1,\ldots,n,
\end{equation}
is a Lie system. Any system of the above type describes integral curves of the time-dependent vector field
$X_t=X_1+\omega^2(t)X_3$, with
\begin{equation*}
X_1=\sum_{i=1}^nv^i\frac{\partial}{\partial x^i},\quad
X_2=\frac 12\sum_{i=1}^n\left(x^i\frac{\partial}{\partial x^i}-v^i\frac{\partial}{\partial v^i}\right),\quad X_3=-\sum_{i=1}^nx^i\frac{\partial}{\partial v^i},
\end{equation*}
spanning a Lie algebra $V_{HO}$ of vector fields isomorphic to $\mathfrak{sl}(2,\mathbb{R})$ \cite{SIGMA}. It turns out that every system (\ref{HarSysFirst}) is a Lie system
and, therefore, the isotropic harmonic oscillators with time-dependent frequency (\ref{HO}) are SODE Lie
systems, as Proposition \ref{QuasiSup} states.

As equations (\ref{HO}) admit the same base-superposition rule, Proposition \ref{IncField}
ensures the existence of a finite-dimensional Lie algebra including $V_{HO}$ and $\Delta_{L}=\sum_{i=1}^nv^i\partial/\partial v^i$. Indeed, it is a straightforward computation to check that $\Delta_L$, $X_1$, $X_2$, and $X_3$ generate a Lie
algebra of vector fields isomorphic to $\mathfrak{gl}(2,\mathbb{R})$.

Let us now turn to exemplifying that superposition rules for systems of SODEs need not be invariant under
time-reparametrisations, as pointed out in  Section \ref{TheorySODE}. Recall that Milne--Pinney equations
(\ref{MilnePinney}) admit a quasi-base superposition rule depending on a constant of motion $I_3$. A
time-reparametrisation $\tau=\int^t_0e^{F(t')}dt'$ transforms (\ref{MilnePinney}) into the dissipative Milne--Pinney equation \cite{Sr86}-\cite{ABCCN97}
\begin{equation}\label{DMP}
\frac{d^2x}{d\tau^2}=-\frac{dF}{dt}(t(\tau))e^{-F(t(\tau))}\frac{dx}{d\tau}-\omega^2(t(\tau))e^{-2F(t(\tau))}x+\frac{ce^{-2F(t(\tau))}}{x^3}\,,
\end{equation}
and the superposition rule (\ref{MilnSup}) for Milne--Pinney equations yields that the general solution of
the above equation can be cast in the form (\ref{MilnSup}) again,  but in terms of a new constant $I_3$ reading
$$I_3=e^{2F(t)}\left(\frac{dx_{(1)}}{d\tau}(t)x_{(2)}(t)-\frac{dx_{(2)}}{d\tau}(t)x_{(1)}(t)\right)^2+
c\left[\left(\frac{x_{(1)}(t)}{x_{(2)}(t)}\right)^2+\left(\frac{x_{(2)}(t)}{x_{(1)}(t)}\right)^2\right].$$
This proves that the (quasi-base) superposition rule for Milne--Pinney equations is not invariant under
time-reparametrisations. Moreover, as SODEs (\ref{DMP}) admit, generically, a time-dependent superposition
rule (see \cite{CL10SecOrd2} for details), they need not be SODE Lie systems. Indeed, it was proved in
\cite{CL08Diss} that systems (\ref{DMP}) are not SODE Lie systems. Then, from Lemma \ref{CondNec},
Milne--Pinney equations cannot admit a base-superposition rule.

Let us now derive a new superposition rule for a relevant type of (nonautonomous) second-order differential
equation: the second-order Kummer--Schwarz equation
\begin{equation}\label{KS2}
\frac{d^2x}{dt^2}=\frac 3{2x} \left(\frac{dx}{dt}\right)^2-2b_0x^3+2a_0(t)x,
\end{equation}
where $b_0$ is a constant and $a_0(t)$ is an arbitrary time-dependent function. The study of these, hereafter
KS-2 equations, is motivated by its appearance in the theory of superposition rules \cite{Be07} and in
the analysis of second-order differential equations, where they appear related to the so-called {\it Kummer
problem} \cite{BR97}. These equations also appear associated with  the so-called second-order Gambier equation \cite{GGG11} and can be used to describe certain cosmological problems \cite{NR02}. Moreover, the solution of several cases of KS-2 equations amounts us to solving certain Milne--Pinney and Riccati equations \cite{Co94,GGG11,Nonlinear}. As these equations are ubiquitous in the physical literature, e.g. they appear in cosmology, quantum mechanics, classical mechanics \cite{NR02,Dissertationes}, the study of KS-2 equations can be consider as a useful approach to the analysis of these equations and their respective related physical problems.

In order to describe a superposition rule for KS-2 equations, we shall first prove that these equations are
SODE Lie systems, which ensures, by Proposition \ref{SODEsSup}, that they admit a superposition rule and
indicates how to derive it.

Recall that demonstrating that KS-2 equations are SODE Lie systems relies on proving that the first-order
system
\begin{equation}\label{FirstOrderKS2}
\left\{\begin{aligned}\frac{dx}{dt}&=v,\\
\frac{dv}{dt}&=\frac 32 \frac{v^2}x-2b_0x^3+2a_0(t) x,\end{aligned}\right.
\end{equation}
is a Lie system. To do this, consider the vector fields
\begin{eqnarray}\label{VFKS2}
 X_1=2x\frac{\partial}{\partial v},\quad X_2=x\frac{\partial}{\partial x}+2v\frac{\partial}{\partial v},\quad
X_3=v\frac{\partial}{\partial x}+\left(\frac 32\frac{v^2}x-2b_0x^3\right)\frac{\partial }{\partial v}.
\end{eqnarray}
Since
\begin{equation*}
[X_1,X_2]=X_1,\qquad [X_1,X_3]=2X_2,\qquad [X_2,X_3]=X_3,
\end{equation*}
they span a Lie algebra of vector fields isomorphic to $\mathfrak{sl}(2,\mathbb{R})$ and, as system (\ref{FirstOrderKS2})
is determined by the time-dependent vector field
$$
X_t=v\frac{\partial}{\partial x}+\left(\frac 32 \frac{v^2}x-2b_0x^3+2a_0(t)x\right)\frac{\partial}{\partial
v}=X_3+a_0(t)X_1,
$$
the second-order Kummer--Schwarz equations are SODE Lie systems.

It is interesting that, like time-dependent frequency harmonic oscillators, Kummer--Schwarz
equations are SODE Lie systems related to a Vessiot--Guldberg Lie algebra isomorphic to
$\mathfrak{sl}(2,\mathbb{R})$. This can be used to establish interesting relations between these equations and
other (SODE) Lie systems associated with the same Lie algebra (cf. \cite{SIGMA}).

Once it has been proved that KS-2 equations are SODE Lie systems, the following step toward deriving their
superposition rule is, in view of Lemma \ref{SODEsSup}, to determine the part of the standard superposition
rule for system (\ref{FirstOrderKS2}) describing the $x$ coordinate of its general solution. To do this, let us
apply the method described in Section \ref{FLS}.

The vector fields $X_1,X_2,X_3$ form a basis for a Vessiot--Guldberg Lie algebra of (\ref{FirstOrderKS2}) and
their prolongations to $({\rm T}\mathbb{R})^2$ are linearly independent at a generic point. Let $\widetilde
X_1$, $\widetilde X_2$, and $\widetilde X_3$ be diagonal prolongations to $({\rm T}\mathbb{R})^3$. As
$[\widetilde X_1,\widetilde X_3]=\widetilde{[X_1,X_3]}=2\widetilde X_2$, if a function $F:\left({\rm
T}\mathbb{R}\right)^3\rightarrow\mathbb{R}$ satisfies $\widetilde X_1F=\widetilde X_3F=0$, then $\widetilde
X_2F=0$. Thus, obtaining a common first-integral for $\widetilde X_1,\widetilde X_2,\widetilde X_3$ reduces
to finding a common first-integral for $\widetilde X_1$ and $\widetilde X_3$.

Consider the canonical coordinates $\{x_0,v_0,x_1,v_1,x_2,v_2\}$ in $({\rm T}\mathbb{R})^3$ and
suppose that the common first-integral $F$ for $\widetilde X_1$ and $\widetilde X_3$ depends only on the
variables $x_0,x_1,v_0$, and $v_1$. As $\widetilde X_1F=0$, the method of characteristics yields that $F$ must
be constant along the solutions of the characteristic system
\begin{equation}
\frac{dv_0}{x_0}=\frac{dv_1}{x_1},\qquad dx_0=dx_1=0.
\end{equation}
Integrating the above system, we find that that there exists a certain function
$F_2:\mathbb{R}^3\rightarrow\mathbb{R}$ such that
$F\left(x_0,x_1,v_0,v_1\right)=F_2\left(x_0,x_1,\xi=x_1v_0-x_0v_1\right)$. In terms of the variables
$x_0,x_1,\xi,v_1$, the condition $\widetilde X_3F=\widetilde X_3F_2=0$ reads
$$
\left(\frac{\xi+v_1x_0}{x_1}\right)\frac{\partial F_2}{\partial x_0}+v_1\frac{\partial F_2}{\partial
x_1}+\left[\frac 32\left(\frac{\xi^2+2\xi v_1x_0}{x_1x_0}\right)+2b_0(x_1^3x_0-x_0^3x_1)\right]\frac{\partial
F_2}{\partial \xi}=0.
$$
As $F_2$ does not depend on $v_1$, the above equation implies
$$
\frac{\xi}{x_1}\frac{\partial F_2}{\partial x_0}+\left[\frac{3
\xi^2}{2x_1x_0}+2b_0(x_1^3x_0-x_0^3x_1)\right]\frac{\partial F_2}{\partial \xi}=0, \quad
\frac{x_0}{x_1}\frac{\partial F_2}{\partial x_0}+\frac{\partial F_2}{\partial
x_1}+\frac{3\xi}{x_1}\frac{\partial F_2}{\partial \xi}=0.
$$
Applying again the method of characteristics to the second equation, we see that there exists a function
$F_3:\mathbb{R}^2\rightarrow\mathbb{R}$ such that $F_2(x_0,x_1,\xi)=F_3(K_1=x_0/x_1,K_2=x_1^3/\xi)$. Let us
express the first of the above equations using the variables $K_1, K_2$, and $\xi$. As a result, it  turns out that
$$
\widetilde X_3F_3=\frac{\xi}{x_1^2}\left(\frac{\partial F_3}{\partial K_1}-\left[\frac
3{2K_1}+2b_0K_2^2(K_1-K_1^3)\right]K_2 \frac{\partial F_3}{\partial K_2}\right)=0.
$$
The characteristic system for the preceding equation is
\begin{equation}\label{FirstInt}
dK_1=-\frac{dK_2}{K_2\left[\frac{3}{2K_1}+2b_0K_2^2(K_1-K_1^3)\right]},
\end{equation}
whose solution is
\begin{equation*}
\Gamma_1=\frac{(v_0x_1-v_1x_0)^2}{x_0^3x_1^3}+4b_0\frac{x_0^2+x_1^2}{x_0x_1}.
\end{equation*}
Hence, $\Gamma_1$ is a first-integral common to $\widetilde X_1$ and $\widetilde X_3$. Similarly, if we
suppose that $F$ depends only on  $x_0,x_2,v_0$, and $v_2$, or, alternatively, on $x_1,x_2,v_1$, and $v_2$, two
new first-integrals appear:
\begin{equation}\label{SecondInt}
\Gamma_2=\frac{(v_0x_2-v_2x_0)^2}{x_0^3x_2^3}+4b_0\frac{x_0^2+x_2^2}{x_0x_2},\qquad
\Gamma_3=\frac{(v_1x_2-v_2x_1)^2}{x_1^3x_2^3}+4b_0\frac{x_1^2+x_2^2}{x_1x_2}.
\end{equation}

Since $\partial(\Gamma_1,\Gamma_2)/\partial(x_0,v_0)\neq 0$ at a generic point of $({\rm T}\mathbb{R})^3$, the procedure described in Section \ref{FLS} allows us to determine the values of $x_{0}$ and $v_{0}$ in terms
of $x_{1}, x_{2}, v_1,v_2$, and two constants giving rise to a superposition rule for system
(\ref{FirstOrderKS2}). Indeed, fixing $\Gamma_1=k_1$ enables us to determine the value of $v_0$ in terms of
$x_0,x_1,v_1,$ and $k_1$. Substituting this value into the equation $k_2=\Gamma_2$ and with the aid of
$\Gamma_3$, we can express $x_0$ in terms of $x_1,x_2,k_1,k_2$, and $\Gamma_3$ as
\begin{equation}\label{QuasiSup}
 x_0=\frac{(\Gamma_3k_1-8b_0k_2)x_1+\!(\Gamma_3k_2-8b_0k_1)x_2\pm
2\lambda_{k_1,k_2}(\Gamma_3)[\Gamma_3x_1x_2-4b_0(x_1^2+x_2^2)]^{1/2}}{16 b_0 \Gamma_3
+x^{-1}_1x^{-1}_2[(k_1x_1-k_2x_2)^2-64b_0^2(x_1^2+x_2^2)]},
\end{equation}
where
\begin{equation*}
\lambda_{k_1,k_2}(\Gamma_3)=\left[256b_0^3+k_1k_2\Gamma_3-4b_0(k_1^2+k_2^2+\Gamma_3^3)\right]^{1/2}.
\end{equation*}
Expressions (\ref{SecondInt}) ensure that $[\Gamma_3 x_1x_2-4b_0(x_1^2+x_2^2)]^{1/2}$ is real. Meanwhile, for
each pair of solutions $x_1(t),x_2(t)$, $k_1$ and $k_2$ must be chosen so that $\lambda_{k_1,k_2}(\Gamma_3)$ is
real.

Since $\Gamma_3$ depends on the variables $x_1,x_2,v_1,v_2,$  it is clear that expression (\ref{QuasiSup})
constitutes a part of a superposition rule for any system of the form (\ref{FirstOrderKS2}), describing the
component $x$ of its general solution in terms of two particular solutions $x_1(t)$, $x_2(t)$ of (\ref{KS2}),
their derivatives $v_1(t),v_2(t)$, and two constants $k_1$ and $k_2$. Therefore, in view of Lemma \ref{Free},
this allows us to write the general solution $x(t)$ of the equation (\ref{KS2}) in terms of two particular
solutions, their derivatives, and two constants. This provides us with a superposition rule $\Upsilon:({\rm\bf t}^1_{x_1},{\rm \bf t}^1_{x_2};k_1,k_2)\in({\rm
T}\mathbb{R})^2\times\mathbb{R}^2\mapsto x=\Upsilon({\rm\bf t}^1_{x_1},{\rm \bf t}^1_{x_2};k_1,k_2))\in\mathbb{R}$ for KS-2 equations of the form
$$
x=\frac{(Ik_1-8b_0k_2)x_1+(Ik_2-8b_0k_1)x_2-2\lambda_{k_1,k_2}(I)[Ix_1x_2-4b_0(x_1^2+x_2^2)]^{1/2}}{16
b_0 I +x^{-1}_1x^{-1}_2[(k_1x_1-k_2x_2)^2-64b_0^2(x_1^2+x_2^2)]},
$$
where $I=\Gamma_3$ is regarded as a function of the variables of $({\rm T}\mathbb{R})^2$. Note in addition
that the above expression can also be naturally considered as a quasi-base superposition rule of the form
$G(x_1,x_2,I;k_1,k_2)$ for KS-2 equations.
\section{A superposition rule for the third-order Kummer--Schwarz equations}
\setcounter{equation}{0}
The present section is devoted to the study of third-order Kummer--Schwarz equations \cite{Be07,Be88,Be82} of the
form
\begin{equation}\label{KS3}
\frac{d^3x}{dt^3}=\frac 32
\left(\frac{dx}{dt}\right)^{-1}\left(\frac{d^2x}{dt^2}\right)^2-2b_0\left(\frac{dx}{dt}\right)^3+2
a_0(t)\frac{dx}{dt},
\end{equation}
with $b_0$ being a constant and $a_0(t)$ being any time-dependent function. Our aim is to exemplify how the results
of Section \ref{HODEs} can be applied to investigate a relevant third-order differential equation. As a
result, it is shown that third-order Kummer--Schwarz equations, hereby KS-3, are HODE Lie systems, and an
interesting relation to KS-2, Riccati, and Milne--Pinney equations is pointed out. Finally, a new superposition
rule for KS-3 equations depending on a single particular solution and three constants is derived.

The relevance of the study of KS-3 equations relies, for instance, on their relation to the so-called {\it
Kummer's problem} \cite{Be07,Be88,Be82}, Milne--Pinney equations \cite{AL08}, and Riccati equations
\cite{AL08,Co94,EEL07}. These relations can be used to study multiple physical systems described by these latter equations through KS-3 equations, e.g. the case of quantum non-equilibrium dynamics of many-body
systems \cite{BDG09}. Furthermore, Kummer--Schwarz equations with $b_0=0$ can be rewritten as $\{x,t\}=2a_0(t)$, where $\{x,t\}$ is the so-called {\it Schwarzian derivative} of the function $x(t)$ with respect to $t$ \cite{LG99}.

In order to study KS-3 equations, let us define  $y^{(1)}=dx/dt$, $y^{(2)}=d^2x/dt^2$, and write equation
(\ref{KS3}) in the form
\begin{equation}\left\{\label{FirstOrderKummer}
\begin{aligned}\frac{dx}{dt}&=y^{(1)},\\
\frac{dy^{(1)}}{dt}&=y^{(2)},\\
\frac{dy^{(2)}}{dt}&=\frac 32 \frac{y^{(2)2}}{y^{(1)}}-2b_0y^{(1)3}+2a_0(t)y^{(1)}.\end{aligned}\right.
\end{equation}
Consider the vector fields $X_1$, $X_2$, and $X_3$ on ${\rm T}^2\mathbb{R}$,
\begin{equation}\label{VFKS1}
\begin{aligned}X_1=2y^{(1)}\frac{\partial}{\partial y^{(2)}},\qquad
X_2=y^{(1)}\frac{\partial}{\partial y^{(1)}}+2y^{(2)}\frac{\partial}{\partial y^{(2)}},\\
X_3=y^{(1)}\frac{\partial}{\partial x}+y^{(2)}\frac{\partial}{\partial y^{(1)}}+\left(\frac 32
\frac{y^{(2)2}}{y^{(1)}}-2b_0y^{(1)3}\right)\frac{\partial}{\partial y^{(2)}},\end{aligned}
\end{equation}
satisfying the commutation relations
\begin{equation*}
[X_1,X_3]=2X_2,\qquad [X_2,X_3]=X_3,\qquad [X_1,X_2]=X_1.
\end{equation*}
Obviously, these vector fields span a Lie algebra of vector fields isomorphic to
$\mathfrak{sl}(2,\mathbb{R})$. Additionally, system (\ref{FirstOrderKummer}) describes integral curves of the
time-dependent vector field $X_t=X_3+a_0(t)X_1$. Thus, KS-3 equations are HODE Lie systems. Let us derive a
superposition rule for them.

The vector fields $X_1, X_2, X_3$ are linearly independent at a generic point of ${\rm T}^2\mathbb{R}$.
Therefore, the diagonal prolongations $\widetilde X_1, \widetilde X_2, \widetilde X_3$ of $X_1,X_2,X_3$ to $({\rm
T}^2\mathbb{R})^2$ span a generalized distribution $\mathcal{D}$. Such a generalized
distribution is three-dimensional in a neighbourhood of a generic point, where the distribution becomes
regular. Hence, the vector fields of the distribution admit, at least locally, three common first-integrals.
As $[X_1,X_3]=2X_2$, we have $[\widetilde X_1,\widetilde X_3]=2\widetilde X_2$, and obtaining first-integrals
common for all the vector fields of $\mathcal{D}$ reduces to determining first-integrals common for
$\widetilde X_1$ and $\widetilde X_3$.

Let us first analyse first-integrals of the vector field $\widetilde X_1$ on $({\rm T}^2\mathbb{R})^2$, i.e.
solutions $F:\left({\rm T}^2\mathbb{R}\right)^2\rightarrow\mathbb{R}$ of the equation
\begin{equation*}
\widetilde X_1F=2y_0^{(1)}\frac{\partial F}{\partial y_0^{(2)}}+2y_1^{(1)}\frac{\partial F}{\partial
y_1^{(2)}}=0.
\end{equation*}
The method of characteristics shows that the first-integrals of the above vector field are
functions constant along the solutions of the so-called characteristic system of $\widetilde X_1$, namely,
\begin{equation*}
\frac{dy^{(2)}_0}{y^{(1)}_0}=\frac{dy^{(2)}_1}{y^{(1)}_1},\qquad dy^{(1)}_0=dy^{(1)}_1=dx_0=dx_1=0.
\end{equation*}
Such solutions are given, in an implicit form, by the algebraic equations $\xi=y^{(1)}_0y^{(2)}_1-y^{(1)}_1y^{(2)}_0,\,\,v_0=y^{(1)}_0,\,\,v_1=y^{(1)}_1,\,\,y_0=x_0,\,\,y_1=x_1$, where $\xi,v_0,v_1,y_0,y_1$ are certain real constants. In other words, any first-integral $F:\left({\rm
T}^2\mathbb{R}\right)^2\rightarrow\mathbb{R}$ of the vector field $\widetilde X_1$ depends only on the
previous variables. Hence, there exists a function $F_2:\mathbb{R}^5\rightarrow\mathbb{R}$ such that
$F(x_0,x_1,y_0^{(1)},y^{(1)}_1,y_0^{(2)},y^{(2)}_1)=F_2(y_0,y_1,v_0,v_1,\xi)$.

Remember that we are interested in determining a common first-integral for the vector fields $\widetilde X_1$
and $\widetilde X_3$. In view of the above result, $\widetilde X_3F_2=0$ amounts us to
\begin{equation*}
\begin{aligned} \sum_{a=0,1}v_a\frac{\partial F_2}{\partial y_a}\!+\left(\frac{v_0a_1-\xi}{v_1}\right)\frac{\partial F_2}{\partial v_0}+\!a_1\frac{\partial F_2}{\partial
v_1}+\!\left(\frac{3\xi a_1}{v_1}-\frac{3\xi^2}{2v_1v_0}+2 b_0(v_0^3v_1-v_1^3v_0)\right)\frac{\partial
F_2}{\partial \xi}=0,\end{aligned}
\end{equation*}
where we defined $a_1\equiv y_1^{(2)}$. In terms of the vector fields
$$
\begin{aligned}
Z_1=v_0\frac{\partial}{\partial y_0}+v_1\frac{\partial}{\partial y_1}-\frac{\xi}{v_1}\frac{\partial}{\partial v_0}+
\left(-\frac{3\xi^2}{2v_1v_0}+2 b_0(v_0^3v_1-v_1^3v_0)\right)\frac{\partial}{\partial \xi},\\
Z_2=v_0\frac{\partial}{\partial v_0}+v_1\frac{\partial}{\partial v_1}+{3\xi}\frac{\partial}{\partial \xi},
\end{aligned}
$$
we can write $\widetilde X_3F_2=Z_1F_2+\frac{a_1}{v_1}Z_2F_2=0$. Since $F_2$ does not depend on $a_1$, the previous decomposition implies  $Z_1F_2=Z_2F_2=0$. Applying the method of characteristics to
$Z_2F_2=0$, we find that $F_2$ must be constant along solutions of the characteristic system
\begin{equation*}
\frac{dv_0}{v_0}=\frac{dv_1}{v_1}=\frac{d\xi}{3\xi},\qquad dy_0=dy_1=0.
\end{equation*}
Therefore, $F_2$ depends only on the variables $K_1=v_1/v_0$, $K_2=v_1^3/\xi$, $y_0,y_1$, i.e.
there exists a function $F_3:\mathbb{R}^4\rightarrow\mathbb{R}$ such that
$F(x_0,x_1,v_0,v_1,a_0,a_1)=F_3(y_0,y_1,K_1,K_2)$.

To obtain a common first-integral for all the vector fields in $\mathcal{D}$, it remains to impose
$Z_1F=Z_1F_3=0$. In the coordinate system $\{y_0,y_1,K_1,K_2,\xi,a_1\}$, this equation reads
\begin{equation*}
 \qquad \xi^{1/3}K_2^{1/3}\left(\frac{1}{K_1}\frac{\partial F_3}{\partial y_0}+\frac{\partial F_3}{\partial
y_1}+\frac{K_1^2}{K_2}\frac{\partial F_3}{\partial
K_1}+\right.\left.\left[\frac{3K_1}{2}-2b_0K_2^2\left(K_1^{-3}-K_1^{-1}\right)\right]\frac{\partial
F_3}{\partial K_2}\right)=0.
\end{equation*}
The characteristic system corresponding to the above equation is
\begin{equation*}
dy_0=\frac{dy_1}{K_1}=\frac{K_2dK_1}{K_1^3}=\frac{K_1^2dK_2}{\frac{3}{2} K_1^4-2b_0K^2_2(1-K_1^{2})}.
\end{equation*}
From the last equality of the above system, we obtain
\begin{equation*}
\frac{dK_2}{dK_1}=\frac{3K_2}{2K_1}-\frac{2b_0}{K_1^5}(1-K_1^2)K_2^3\longrightarrow K_2(K_1)=\pm
\frac{K_1^2}{\sqrt{K_1\Gamma_1-4 b_0(1+K_1^2)}},
\end{equation*}
for a certain real constant $\Gamma_1$. Consequently, a common first-integral of the vector fields of
$\mathcal{D}$ reads
\begin{equation*}
\Gamma_1=\frac{K_1^4+4b_0K_2^2(1+K_1^2)}{K_1K_2^2}.
\end{equation*}
Now, $dy_1=K_2dK_1/K_1^2$ and the above expression yields
\begin{equation}\label{Gamma2}
\frac{dy_1}{dK_1}={\rm sg}(K_2){(\Gamma_1 K_1-4b_0(1+K_1^2))}^{-1/2},
\end{equation}
where ${\rm sg}$ stands for the {\it sign function}. Assume, for simplicity, ${\rm sg}(K_2)=1$ and
$b_0<0$. Hence,
\begin{equation*}
\Gamma_2=\left(-8b_0K_1+\Gamma_1+4\sqrt{4b^2_0(1+K_1^2)-b_0K_1\Gamma_1}\right)e^{-2x_1\sqrt{-b_0}}
\end{equation*}
is a second first-integral. Likewise, from $\Gamma_1$ and expression
$dy_0=K_2dK_1/K_1^3$, one gets
\begin{equation*}
\frac{dy_0}{dK_1}={\left(K_1\sqrt{\Gamma_1 K_1-4b_0(1+K_1^2)}\right)}^{-1},
\end{equation*}
i.e.
\begin{equation*}
\Gamma_3=y_0-\frac{1}{2\sqrt{-b_0}}\ln\left[\frac{2\sqrt{-b_0}K_1}{-8b_0+K_1\Gamma_1+4\sqrt{4b_0^2(1+K_1^2)-b_0K_1\Gamma_1}}\right],
\end{equation*}
is another first-integral. As $\partial(\Gamma_1,\Gamma_2,\Gamma_3)/\partial(x_0,y^{(1)}_0,y^{(2)}_0)\neq 0$ at a generic point of $({\rm T}^2\mathbb{R})^2$, fixing $k_1=\Gamma_1$, and $k_2=\Gamma_2$, we can easily express $K_1$ in terms of
$k_1, k_2$ and $x_1$. Using this and putting $k_3=\Gamma_3$, we obtain
$$
x_0=k_3+\ln\left[\frac{2\sqrt{-b_0}[64 b_0^2-f^2_{k_1,k_2}(x_1)]}{64
b_0^2(k_1-2e^{2\sqrt{-b_0}x_1}k_2)-k_1f^2_{k_1,k_2}(x_1)+8b_0(64b_0^2-k_1^2+e^{4\sqrt{-b_0}x_1}k_2^2)}\right]^{\frac{1}{2\sqrt{-b_0}}},
$$
where $f_{k_1,k_2}(x_1)=k_1-e^{2\sqrt{-b_0}x_1}k_2$. Note that the above expression is a superposition rule for
third-order Kummer--Schwarz equations, which provides the general solution $x_0(t)$ of any instance of such
equations in terms of a generic particular solution $x_1(t)$ and the constants $k_1,k_2,k_3$. Obviously, this
represents an improvement with respect to other similar expressions for KS-3 equations, which allows us to
describe their general solutions in terms of two particular solutions of a time-dependent frequency harmonic
oscillator \cite{Be07}. In addition, this expression is an instance of a quasi-base superposition rule for a
third-order differential equation.
\section{Conclusions and Outlook}
\setcounter{equation}{0}
We have proposed and analysed a general concept of a superposition rule for systems of HODEs. Some specific
types of such superposition rules that appear in the literature have been studied and other new types have been
introduced and investigated. All our results have been illustrated with examples extracted from the mathematical
and physics literature. In particular, two new superposition rules for second- and third-order Kummer--Schwarz
equations have been derived.

There are still many open questions concerning the properties of
superposition rules for systems of HODEs. For instance, it would be interesting to find methods for analysing
the existence of solutions of system (\ref{Char}), which would facilitate the determination of the existence
of superposition rules for systems of SODEs. Additionally, it would be interesting to
apply the methods developed here to analyse first-order systems from a new perspective.

In the future, we intend to study the whole {\it Riccati hierarchy} \cite{GL99}, some of whose members, like second-order
Riccati equations, have already been analysed by means of the theory of Lie systems \cite{CL10SecOrd2}. Further, we aim to apply our results in the analysis of soliton solutions of PDEs described by the Riccati hierarchy \cite{GL99}. Additionally, we plan to employ the theory of Lie systems so as to geometrically explain the relation of Kummer--Schwarz equations to Lie systems associated with a Vessiot--Guldberg Lie algebra isomorphic to $\mathfrak{sl}(2,\mathbb{R})$. This may be used to clarify their known
connections with time-dependent harmonic oscillators or Riccati equations \cite{Co94,Be88} as well as to establish new ones. These and other topics
will be analysed in forthcoming works.

\section{Acknowledgements}

Partial financial support by research projects MTM2009-11154, MTM2010-12116-E, and E24/1 (DGA) are
acknowledged. Research of the second author financed by the Polish Ministry of Science and Higher Education
under the grant N N201 365636. J. de Lucas also acknowledges financial support by DGA under project  FMI43/10
to accomplish a research stay in the University of Zaragoza.

\end{document}